\newtheorem{theorem}{Theorem}[section] 
\newtheorem{lemma}[theorem]{Lemma}
\newtheorem{corollary}[theorem]{Corollary}
\newtheorem{remark}[theorem]{Remark}
\newtheorem{defn}[theorem]{Definition} 
\def\blksquare{\rule{2mm}{2mm}} 
\def\qedsymbol{\blksquare}
\newcommand{\qedhere}{\tag*{\qedsymbol}}
\newcommand{\bg}[1]{\medskip\noindent{\bf #1}}
\newcommand{\ed}{{\hfill\qedsymbol}\medskip} 
\newenvironment{proofbox}[1][Proof]{\bg{{#1}\ :\ }}{\ed}
\newenvironment{proof}{\begin{proofbox}}{\end{proofbox}}
\newenvironment{proofnobox}{\bg{Proof\ :\ }}{\medskip}
\newcommand{\R}{\ensuremath{\mathbb R}}
\newcommand{\F}{\ensuremath{\mathcal F}}
\newcommand{\D}{\ensuremath{\mathcal D}} 
\newcommand{\T}{\ensuremath{\mathcal T}}
\newcommand{\Pc}{\ensuremath{\mathcal P}}
\newcommand{\Qc}{\ensuremath{\mathcal Q}} 
\newcommand{\OPT}{\ensuremath{\mathit{OPT}}}
\newcommand{\sm}{\ensuremath{\setminus}} 
\newcommand{\es}{\ensuremath{\emptyset}}
\newcommand{\ceil}[1]{\ensuremath{\left\lceil#1\right\rceil}}
\newcommand{\poly}{\operatorname{\mathsf{poly}}}
\newcommand{\e}{\ensuremath{\epsilon}} 
\newcommand{\gm}{\ensuremath{\gamma}}
\newcommand{\sse}{\subseteq}
\newcommand{\vrp}{\ensuremath{\mathsf{VRP}}\xspace}
\newcommand{\mlp}{\ensuremath{\mathsf{MLP}}\xspace}
\newcommand{\rvrp}{\ensuremath{\mathsf{RVRP}}\xspace}
\newcommand{\mst}{\ensuremath{\mathsf{MST}}\xspace}
\newcommand{\tsp}{\ensuremath{\mathsf{TSP}}\xspace}
\newcommand{\atsp}{\ensuremath{\mathsf{ATSP}}\xspace}
\newcommand{\iopt}{\ensuremath{O^*}} 
\newcommand{\into}{\ensuremath{\mathrm{in}}}
\newcommand{\out}{\ensuremath{\mathrm{out}}}
\newcommand{\red}{\ensuremath{\mathsf{red}}}
\newcommand{\dist}{\ensuremath{D}}
\newcommand{\reg}{\ensuremath{\mathsf{reg}}}
\newcommand{\ld}{\ensuremath{\lambda}} 
\newcommand{\kp}{\ensuremath{\kappa}}
\newcommand{\al}{\ensuremath{\alpha}} 
\newcommand{\tht}{\ensuremath{\theta}} 
\newcommand{\dt}{\ensuremath{\delta}}
\newcommand{\sg}{\ensuremath{\sigma}}
\newcommand{\tP}{\ensuremath{\tilde P}} 
\newcommand{\hP}{\ensuremath{\hat P}}
\newcommand{\ox}{{\overline x}\xspace}
\newcommand{\rewd}{\ensuremath{\mathsf{rewd}}\xspace}
\newcommand{\ptp}{\ensuremath{\mathsf{P2P}}\xspace}
\title{Compact, Provably-Good LPs for Orienteering and Regret-Bounded Vehicle Routing
\footnote{A preliminary version~\cite{FriggstadS17} appeared in the Proceedings of the
  19th Conference on Integer Programming and Combinatorial Optimization, 2017.}} 
\author{ 
    Zachary Friggstad\thanks{{\tt zacharyf@ualberta.ca}.  
    Dept. of Computer Science, Univ. Alberta, Edmonton, AB T6G 2E8.
    Supported by the Canada Research Chairs program and an NSERC Discovery grant.}  
\and
    Chaitanya Swamy\thanks{{\tt cswamy@uwaterloo.ca}.  
    Dept. of Combinatorics and Optimization, Univ. Waterloo, Waterloo, ON N2L 3G1. 
    Supported in part by NSERC grant 327620-09 and an NSERC Discovery Accelerator
    Supplement Award.}
}
\date{}
\begin{document}

\maketitle

\vspace*{-3ex}

\begin{abstract}
We develop polynomial-size LP-relaxations for {\em orienteering} and the 
{\em regret-bounded vehicle routing problem} (\rvrp) and devise suitable LP-rounding 
algorithms that lead to various new insights and approximation results for these problems.  
In orienteering, the goal is to find a maximum-reward $r$-rooted path, possibly ending at
a specified node, of length at most some given budget $B$. 
In \rvrp, the goal is to find the minimum number of $r$-rooted paths of {\em regret} at
most a given bound $R$ that cover all nodes, where the regret of an $r$-$v$ path is its
length $-$ $c_{rv}$. 

For {\em rooted orienteering}, we introduce a natural bidirected LP-relaxation and obtain
a simple $3$-approximation algorithm via LP-rounding. This is the {\em first LP-based}
guarantee for this problem. We also show that {\em point-to-point} (\ptp) 
{\em orienteering} can be reduced to a regret-version of rooted orienteering at the
expense of a factor-2 loss in approximation. 
For \rvrp, we propose two compact LPs that lead to significant 
improvements, in both approximation ratio and running time, over the approach
in~\cite{FriggstadS14}. 
One of these is a natural modification of the LP for rooted orienteering;  
the other is an unconventional formulation that is motivated by certain structural
properties of an \rvrp-solution, which leads to a $15$-approximation algorithm for \rvrp.  
\end{abstract}

\section{Introduction}
Vehicle-routing problems (\vrp{}s) constitute a broad class of optimization
problems that find a wide range of applications and have been widely studied in the
Operations Research and Computer Science literature (see, 
e.g.,~\cite{HaimovichK85,TothV02,CharikarR98,BlumCKLMM07,BansalBCM04,ChekuriKP12}). 
Despite this extensive study, we have rather limited understanding of LP-relaxations for
\vrp{}s (with \tsp and the minimum-latency problem, to a lesser extent, being exceptions),
and this has been an impediment in the design of approximation algorithms for these
problems. 

Motivated by this gap in our understanding, we investigate whether one can develop
polynomial-size (i.e., compact) LP-relaxations with good integrality gaps for 
\vrp{}s, focusing on the fundamental {\em orienteering}
problem~\cite{GoldenLV87,BlumCKLMM07,ChekuriKP12} and the related  
{\em regret-bounded vehicle routing problem} (\rvrp)~\cite{BockGKS11,FriggstadS14}. 
In {\em orienteering}, we are given rewards associated with clients located in a metric
space, a length bound $B$, a start, and possibly end, location for the vehicle, and we
seek a route of length at most $B$ that gathers maximum reward.
This problem frequently arises as a subroutine when solving \vrp{}s, both in approximation 
algorithms---e.g., for minimum-latency problems
(\mlp{}s)~\cite{BC+94,FakcharoenpholHR07,ChakrabartyS16,PostS15},   
TSP with time windows~\cite{BansalBCM04}, 
\rvrp~\cite{BockGKS11,FriggstadS14}---%
as well as in computational methods where orienteering 
corresponds to the ``pricing'' problem encountered in solving set covering/partitioning
LPs (a.k.a configuration LPs) for \vrp{}s via a column-generation or branch-cut-and-price 
method.
In \rvrp, we have a metric space $\{c_{uv}\}$ on client locations, a start
location $r$, and a {\em regret} bound $R$. The regret of a path $P$ starting at $r$ and  
ending at location $v$ is $c(P)-c_{rv}$. The goal in \rvrp is to find a minimum
number of $r$-rooted paths of regret at most $R$ that visit all clients.

\paragraph{Our contributions.}
We develop polynomial-size LP-relaxations for orienteering and \rvrp and devise
suitable rounding algorithms for these LPs, which lead to various new insights and
approximation results for these problems.  

In Section~\ref{sec:rt_orient}, we introduce a natural, compact LP-relaxation for 
{\em rooted orienteering}, wherein only the vehicle start node is specified, and
design a simple rounding algorithm to convert an LP-solution to an integer solution losing 
a factor of at most 3 in the objective value. This is the {\em first LP-based}
approximation guarantee for orienteering. In contrast, all other approaches for
orienteering 
utilize dynamic programming (DP) to stitch together suitable subpaths. 

In Section~\ref{sec:ptp_orient}, we consider the more-general 
{\em point-to-point} (\ptp) {\em orienteering} problem, where both the start and end nodes
of the vehicle are specified. We present a novel reduction showing that 
\ptp-orienteering can be {\em reduced} to a {\em regret-version} of rooted orienteering,  
wherein the length bound is replaced by a 
{\em regret bound}, incurring a factor-2 loss (Theorem~\ref{p2predn}).  
No such reduction to a rooted problem was known previously, and all known algorithms for
\ptp-orienteering rely on approximations to suitable \ptp-path problems.  
Typically, 
constraining a \vrp by requiring that routes include a fixed node $t$ 
causes an increase in the route lengths of the unconstrained problem (as we need to 
attach $t$ to the routes); this would violate the length bound in orienteering, 
but, notably, we devise a way to avoid this in our reduction.    
We believe that the insights gained from our reduction may find further application.
Our results for rooted orienteering translate to the regret-version of orienteering,
and combined with the above reduction, give a compact LP for \ptp-orienteering having
{integrality gap at most $6$.}

Although 
we do not improve the current-best approximation factor of $(2+\e)$ for
orienteering~\cite{ChekuriKP12}, we believe that our LP-based approach is
nevertheless appealing for various reasons.
First, our LP-rounding algorithms 
are quite simple, and arguably, simpler than the DP-based approaches
in~\cite{BlumCKLMM07,ChekuriKP12}. 
Second, our LP-based approach offers the promising possibility 
that, by leveraging the key underlying ideas, 
one can obtain strong, compact LP-relaxations for other problems that utilize
orienteering. Indeed, we already present evidence of such benefits by 
showing 
in Section~\ref{orientreglp} that our LP-insights for rooted orienteering yield a
compact, provably-good LP for \rvrp. 
(We remark that various configuration LPs considered for \vrp{}s  
give rise to \ptp-orienteering as the dual-separation problem, and utilizing our compact    
orienteering-LP in the dual could yield another way of obtaining a compact LP.)
Finally, LP-based insights often tend to be powerful 
and have the potential to result in both improved guarantees, and algorithms for variants
of the problem.   
In fact, we suspect that our orienteering LPs, \eqref{lp:root_or}, \eqref{lp:p2p_or},
are better than what we have accounted for, 
and believe that they are a promising means of 
{improving the state-of-the-art for orienteering.} 

Section~\ref{sec:rvrp} considers \rvrp, and proposes two compact LP-relaxations for 
\rvrp and corresponding rounding algorithms. Our LP-based algorithms not only yield
improvements over the current-best $28.86$-approximation for \rvrp~\cite{FriggstadS14},
but also result in substantial savings in running time compared  
to the algorithm in~\cite{FriggstadS14}, which involves solving a configuration LP 
(with an exponential number of path variables) using the $\Omega(n^{1/\e})$-time
$(2+\e)$-approximation algorithm for orienteering in~\cite{ChekuriKP12} as a subroutine.   
The first LP for \rvrp is a natural modification of our LP for rooted orienteering, which
we show has integrality gap at most $27$ (Theorem~\ref{rlp1thm}). 
In Section~\ref{newreglp}, we formulate a rather atypical LP-relaxation
\eqref{rlp2} for \rvrp by exploiting certain key structural insights for \rvrp. 
We observe that an \rvrp-solution can be regarded as a collection of distance-increasing 
rooted paths covering some {\em sentinel} nodes $S$ and a low-cost way of connecting the
remaining nodes to $S$, and our LP aims to find the best such solution. 
We design a rounding algorithm for this LP that leads to a 15-approximation algorithm for
\rvrp, which is a significant improvement over the guarantee obtained
in~\cite{FriggstadS14}.

Finally, in Section~\ref{minregtsp}, we observe that our techniques imply that the
integrality gap of a Held-Karp style LP for the {\em asymmetric-TSP} (\atsp) {\em path} 
problem is 2 for the class of asymmetric metrics induced by the regret objective.   

\smallskip
To give an overview of our techniques, a key tool that we use in our rounding algorithms,  
which also motivates our LP-relaxations, is an arborescence-packing result
of~\cite{BangjensenFJ95} showing that    
an $r$-preflow $x\in\R_+^A$ in a digraph $D=(N,A)$ (i.e., 
$x\bigl(\dt^\into(v)\bigr)\geq x\bigl(\dt^\out(v)\bigr)\ \forall v\neq r$)
dominates a weighted collection of $r$-rooted (non-spanning) out-arborescences
(Theorem~\ref{arbpack}). 
An $r$-preflow $x$ in the bidirected version of our metric, $D$, is a natural
relaxation of an $r$-rooted path, and the $r\leadsto u$ connectivity under $x$
abstracts whether $u$ lies on this path. This leads to our LP \eqref{lp:root_or}
for (rooted) orienteering. 
The idea behind the rounding is that 
if we know the node $v$ on the optimum path with maximum $c_{rv}$ value, then we can 
enforce that our the LP-preflow $x$ is consistent with $v$. Hence, we can decompose $x$ 
into arborescences containing $v$ of average length at most $B$,
which yield $r$-$v$ paths of average 
{\em regret} at most $2(B-c_{rv})$. These in turn can be converted 
(see Lemma~\ref{avg2max}) into a weighted collection of paths of total weight at most $3$,
where each path has regret at most $B-c_{rv}$ and ends at some node $u$ with 
$c_{ru}\leq c_{rv}$; returning the maximum-reward path in this collection yields a
$3$-approximation.

\paragraph{Related work.}
The orienteering problem seems to have been first defined in~\cite{GoldenLV87}. Blum et
al.~\cite{BlumCKLMM07} gave the first $O(1)$-factor approximation for rooted
orienteering. They obtained an approximation ratio of $4$, which was generalized to
\ptp-orienteering, and improved to $3$~\cite{BansalBCM04} and then to
$2+\e$~\cite{ChekuriKP12}.   

Orienteering is closely related to the {\em $k$-\{stroll, \mst, \tsp{}\}} problems, which
seek a minimum-cost rooted \{path,tree,tour\} respectively spanning at least $k$ nodes
(so the roles of objective and constraint are interchanged). $k$-\mst has a rich
history of study that culminated in a factor-$2$ approximation for both $k$-\mst and
$k$-\tsp~\cite{Garg05}. Chaudhuri et al.~\cite{ChaudhuriGRT03} obtained a
$(2+\e)$-approximation algorithm for $k$-stroll. They also showed that for certain  
values of $k$, one can obtain a tree spanning $k$ nodes and containing two specified nodes
$r,t$, of cost at most the cheapest $r$-$t$ path spanning $k$ nodes. In particular, this
holds for $k=n$, and yields an alternative way of obtaining a $2$-approximation algorithm
for the {\em minimum-regret TSP-path} problem considered in Section~\ref{minregtsp}.
The orienteering algorithms in~\cite{BlumCKLMM07,BansalBCM04,ChekuriKP12} are all 
based on 
first obtaining suitable subpaths
by approximating the {\em min-excess path} problem using a $k$-stroll algorithm as a
subroutine, and then stitching together these subpaths via a DP. (For a rooted 
path, the notions of excess and regret coincide; 
we use the term regret as it is more in line with the terminology used in the
vehicle-routing literature~\cite{SpadaBL05,ParkK10}.) 

The use of regret as a vehicle-routing objective seems to have been first considered
in~\cite{SpadaBL05}, who present various heuristics, 
and \rvrp is sometimes referred to as the {\em schoolbus problem} in the
literature~\cite{SpadaBL05,ParkK10,BockGKS11}. Bock et al.~\cite{BockGKS11} were the first
to consider \rvrp from an approximation-algorithms perspective. They obtain approximation
factors of $O(\log n)$ for general metrics and $3$ for tree metrics. Subsequently,
Friggstad and Swamy~\cite{FriggstadS14} gave the first constant-factor approximation
algorithm for \rvrp, obtaining a $28.86$-approximation via an LP-rounding procedure for a 
configuration LP.

\section{Preliminaries and notation}
Both orienteering and \rvrp involve 
a complete undirected graph  
$G=(\{r\}\cup V,E)$, where $r$ is a distinguished root (or depot) node, and metric edge
costs $\{c_{uv}\}$. 
Let $n=|V|+1$.
We call a path $P$ in $G$ rooted if it begins at $r$. 
We always think of the nodes on $P$ as being ordered in
increasing order of their distance along $P$ from $r$, 
and directing $P$ away from $r$ means that we direct each edge $uv\in P$ from $u$ to
$v$ if $u$ precedes $v$ (under this ordering). 
We use $\dist_v$ to denote $c_{rv}$ for all $v\in V\cup\{r\}$. 
Let $\T$ denote the collection of all $r$-rooted trees in $G$.
For a vector $d\in\R^E$, and a subset $F\sse E$, we use $d(F)$ to denote 
$\sum_{e\in F}d_e$. Similarly, for a vector $d\in\R^V$ and $S\sse V$, we use $d(S)$ to
denote $\sum_{v\in S}d_v$.

\paragraph{Regret metric and \rvrp.}
For every ordered pair $u,v \in V \cup \{r\}$, define the {\em regret distance}
(with respect to $r$) to be $c^{\reg}_{uv}:=\dist_u+c_{uv}-\dist_v$. 
The regret distances $\{c^{\reg}_{uv}\}$ form an asymmetric metric that we call the 
{\em regret metric}. 
The regret of a node $v$ lying on a rooted path $P$ is given by
$c^{\reg}_P(v):=c_P(v)-\dist_v=(c^{\reg}$-length of the $r$-$v$ portion of $P$), where
$c_P(v)$ is the length of the $r$-$v$ subpath of $P$. Define the regret of $P$ to be
$c^\reg(P)$, which is also the regret of the end-node of $P$.
Observe that $c^\reg(Z)=c(Z)$ for any cycle $Z$. 
{We utilize the following results from~\cite{FriggstadS14}.}

\begin{lemma}[\cite{FriggstadS14}] \label{cor:average} \label{avg2max}
Let $R\geq 0$. Given rooted paths $P_1, \ldots, P_k$ with total regret $\alpha R$, 
we can efficiently find at most $k+\al$ rooted paths, each having regret at most $R$, that
cover $\bigcup_{i=1}^k P_i$. 
\end{lemma}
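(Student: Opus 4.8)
The plan is to take the given rooted paths $P_1,\dots,P_k$, split the ones whose regret exceeds $R$ at carefully chosen breakpoints, and bound the number of pieces produced. The key quantity is the regret along a path: as we walk along a rooted path $P_i$ away from $r$, the regret $c^\reg_{P_i}(v)$ is nondecreasing in $v$ (since each edge contributes a nonnegative amount of $c^\reg$-length). So on each $P_i$ we can identify the prefix of maximal regret $\le R$, cut there, and recurse on the suffix — but the suffix, as a {\em rooted} path, needs the edge from $r$ to its first node prepended, and we must argue this still has controllably small regret.

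First I would make precise the operation of "cutting" a rooted path $P$ at a node $w$: the prefix is simply the $r$-$w$ subpath of $P$, which is rooted and has regret $c^\reg_P(w)$; the suffix is the $w$-to-end portion of $P$ with a new first edge $rw$ attached, giving a rooted path whose regret is $c^\reg(P) - c^\reg_P(w)$. This identity is the crux: writing $P = (r,v_1,v_2,\dots,v_m)$ and letting $w = v_j$, the new rooted path $(r, v_j, v_{j+1}, \dots, v_m)$ has $c^\reg = (d_{v_j} + c_{v_j v_{j+1}} - d_{v_{j+1}}) + \dots = \sum_{t \ge j} c^\reg_{v_t v_{t+1}}$, whereas $c^\reg(P) = \sum_{t \ge 0} c^\reg_{v_t v_{t+1}}$ and $c^\reg_P(v_j) = \sum_{t < j} c^\reg_{v_t v_{t+1}}$; so the regret is exactly additive across the cut. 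Hence if we choose $v_j$ to be the last node with $c^\reg_P(v_j) \le R$, the prefix has regret $\le R$ by construction, and the suffix has strictly smaller regret than $P$ (it dropped by at least $c^\reg_P(v_j) \ge$ the regret of $v_{j-1}$... more to the point, the suffix's regret is $c^\reg(P) - c^\reg_P(v_j)$, and since $v_j$ was the last node with prefix-regret $\le R$, repeating the process decomposes $P_i$ into $1 + \lfloor c^\reg(P_i)/R \rfloor$ pieces, or more cleanly into $\lceil c^\reg(P_i)/R\rceil$ pieces when $c^\reg(P_i) > 0$).

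The counting then goes as follows. For each $i$, the number of pieces is at most $1 + c^\reg(P_i)/R$ (a path with regret $\le R$ needs no cutting, contributing $1$; a path with regret $\alpha_i R$ contributes at most $1 + \alpha_i$ pieces, since each cut except possibly the last removes a full $R$ of regret). Summing over $i$, the total number of pieces is at most $\sum_{i=1}^k \bigl(1 + c^\reg(P_i)/R\bigr) = k + \bigl(\sum_i c^\reg(P_i)\bigr)/R = k + \alpha$, using the hypothesis $\sum_i c^\reg(P_i) = \alpha R$. Each piece is a rooted path of regret $\le R$, and collectively the pieces' node sets cover $\bigcup_i P_i$ (cutting loses no nodes: the node $w$ at a cut appears as the endpoint of the prefix and the start of the suffix, and prepending the edge $rw$ to the suffix only adds $r$, which is on every piece anyway). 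The whole procedure is clearly polynomial-time.

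The main subtlety — and what I'd expect to spend the most care on — is the interaction between the cut and the triangle inequality / metric structure. Prepending $rw$ to the suffix is only "free" in the regret metric because $c^\reg_{rw} = d_r + c_{rw} - d_w = c_{rw} - c_{rw} = 0$ (as $d_r = c_{rr} = 0$ and $d_w = c_{rw}$), so the new first edge contributes exactly zero regret; this is precisely the clean additivity I used above and deserves an explicit sentence. One should also double-check the edge case where some $c^\reg(P_i) = 0$ (e.g. $P_i$ is just $r$ or a single edge $rv$), in which case it contributes one piece and zero to the regret sum, consistent with the bound; and the degenerate case $R = 0$, where the hypothesis forces $\sum_i c^\reg(P_i) = 0$, so every $P_i$ already has regret $0$ and no cutting is needed. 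Modulo these checks, the argument is a straightforward "greedily chop off prefixes of length $R$" with the regret-metric identity doing all the real work.
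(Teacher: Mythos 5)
The high-level idea is right and is the same as the paper's: the crux is that prepending the edge $rw$ to re-root a subpath at $r$ costs zero regret (since $c^{\reg}_{rw} = \dist_r + c_{rw} - \dist_w = 0$), so regret splits additively when you cut and re-root. But the specific cut rule you formalize has a genuine bug, and the counting claim built on it is not justified.

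Your rule is: let $v_j$ be the \emph{last} node with $c^{\reg}_P(v_j)\le R$, take the prefix $(r,v_1,\dots,v_j)$, and take the suffix $(r,v_j,v_{j+1},\dots,v_m)$, keeping $v_j$. The regret ``removed'' by this cut is $c^{\reg}_P(v_j)$, which can be arbitrarily small---in fact zero. Concretely, if the single edge $(v_1,v_2)$ has $c^{\reg}_{v_1v_2}>R$, then $v_j=v_1$ (because $c^{\reg}_P(v_1)=0$ while $c^{\reg}_P(v_2)>R$), the prefix is $(r,v_1)$ with regret $0$, and the suffix $(r,v_1,v_2,\dots,v_m)$ \emph{is} $P$; the recursion loops. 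More generally, the assertion ``each cut except possibly the last removes a full $R$ of regret'' is false for this rule: the greedy choice only guarantees $c^{\reg}_P(v_{j+1})>R$, i.e.\ $c^{\reg}_P(v_j)+c^{\reg}_{v_jv_{j+1}}>R$, which says nothing about $c^{\reg}_P(v_j)$ alone.

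The fix is small and brings you to (what is effectively) the paper's argument: make the suffix $(r,v_{j+1},\dots,v_m)$, i.e.\ drop $v_j$ from the suffix (coverage is unaffected since $v_j$ lies on the prefix). Equivalently, partition the node sequence of $P$ into consecutive blocks, each extended greedily while the block's re-rooted regret stays $\le R$. Then the suffix's regret is $c^{\reg}(P)-c^{\reg}_P(v_{j+1}) < c^{\reg}(P)-R$, so each cut genuinely removes \emph{more than} $R$, and a path of regret $\beta$ yields at most $1+\beta/R$ pieces. Summing over $i$ gives $k+\al$, as desired. With this correction your proof is essentially the one in~\cite{FriggstadS14}.
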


\begin{theorem}[\cite{FriggstadS14}] \label{fs-rvrpthm}
Let $x=(x_P)_{P\in\Pc}$ be a weighted collection of rooted paths such that 
$\sum_{P\in\Pc: v\in P}x_P\geq 1$ for all $v\in V$. Let $R\geq 0$ be some given parameter. 
Let $k=\sum_{P\in\Pc}x_P$ and $\sum_{P\in\Pc}c^\reg(P)x_P=\al R$. Then, for
any $\tht\in(0,1)$, we can round $x$ to obtain a collection of at most 
$\bigl(\frac{6}{1-\tht}+\frac{1}{\tht}\bigr)\al+\ceil{\frac{k}{\tht}}$
rooted paths each of regret at most $R$ that cover all nodes in $V$.
\end{theorem}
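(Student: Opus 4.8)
The plan is to route the fractional path-collection $x$ through an arborescence-packing step and then finish with Lemma~\ref{avg2max}, using the parameter $\tht$ to fix a regret-threshold that separates the part of $x$ which can be integralized cheaply from the part that is charged against the budget $\al R$. One elementary observation underlies everything: attaching the root to any node $s$ by the direct edge costs \emph{zero} regret, since $c^\reg_{rs}=\dist_r+c_{rs}-\dist_s=0$; hence any subpath or subtree of a rooted structure can be re-rooted at $r$ for free in the regret metric.

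First I would pass from paths to flow. Orient each $P\in\Pc$ away from $r$ and superimpose these orientations with weight $x_P$; the resulting nonnegative vector $f$ on the arcs of the bidirected metric is an $r$-preflow (flow is conserved at interior path-nodes and absorbed only at path-endpoints), has $f\bigl(\dt^\out(r)\bigr)=k$, has $r$-to-$v$ connectivity at least $1$ for every $v\in V$, and satisfies $\sum_a f_a c^\reg_a=\sum_{P\in\Pc}x_P\,c^\reg(P)=\al R$. (It is essential to use the regret lengths $c^\reg_a$ here, since the ordinary $c$-cost of $f$ need not be bounded.) Applying Theorem~\ref{arbpack} to $f$ with arc-lengths $c^\reg_a$ produces a weighted family $\{(z_B,B)\}$ of $r$-rooted out-arborescences with $\sum_{B\ni v}z_B\ge1$ for every $v\in V$, with $\sum_B z_B$ at most (a constant times) $k$, and with $\sum_B z_B\,c^\reg(B)\le\al R$; moreover, because $f$ is a superposition of $k$ ``units'' that each terminate once, the packing can be arranged so that the total (weighted) number of leaves of the $B$'s is controlled by $k$ as well.

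Next comes the threshold step. In each $B$ I would cut off every subtree rooted at a node whose regret within $B$ exceeds $\tht R$ but whose parent's does not, and re-root each excised piece at $r$ for free; this breaks the packing into (i) ``core'' trees, each of regret at most $\tht R$, and (ii) re-rooted ``tail'' trees carrying the residual regret, total at most $\al R$, the two families together still covering $V$. For each node $v$, either the cores contain it with total weight $\ge\tht$ or the tails contain it with total weight $>1-\tht$; rescaling the cores by $\tfrac1\tht$ on the first group of nodes and the tails by $\tfrac1{1-\tht}$ on the second lets me select an integral subcollection of each, keeping the core count proportional to $\tfrac k\tht$, the core regret proportional to $\tfrac{\al}\tht R$, and the tail regret proportional to $\tfrac{\al}{1-\tht}R$. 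Finally, expanding every selected tree into its root-to-leaf paths (re-rooting the non-root pieces for free) and invoking Lemma~\ref{avg2max} once on the whole resulting collection yields rooted paths of regret at most $R$: a tree of regret $\rho R$ costs at most $(\text{\#leaves})+\rho$ such paths, so the leaf-count contributions aggregate to $\ceil{k/\tht}$ and the regret contributions to $\bigl(\tfrac6{1-\tht}+\tfrac1\tht\bigr)\al$, the constant $6$ absorbing the couple of doubling/Lemma~\ref{avg2max} passes made over the tail arcs.

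The hard part will be the bookkeeping of the last two steps done together: one has to produce an \emph{integral} cover by only $O(k)$ bounded-regret paths \emph{without} the $\Theta(\log n)$ loss that naive randomized rounding of the set-cover-type covering constraints would incur, and without the branching of the packed trees blowing up the path count. This is exactly where arborescence-packing is used — it converts ``which paths to include'' into ``which connected trees to include'' and, crucially, ties the leaf count of those trees to the fractional path-count $k$ — and where the threshold $\tht$ earns its keep, trading the $\tfrac1{1-\tht}$ Markov-type loss on the high-regret tails against the $\tfrac1\tht$ rescaling loss on the low-regret cores and keeping the constant on the $\tfrac1{1-\tht}$ term down to $6$.
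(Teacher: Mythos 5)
Your overall plan — convert the fractional path cover to a preflow, decompose via Theorem~\ref{arbpack}, split by a regret threshold $\tht$, then finish with Lemma~\ref{avg2max} — is a genuinely different route from the one in~\cite{FriggstadS14}, and unfortunately it has a gap precisely at the step you flag as ``the hard part.'' The proof in~\cite{FriggstadS14} (echoed almost verbatim in the rounding algorithm of Section~\ref{newreglp} of this paper, since the two share the bound $(\frac{6}{1-\tht}+\frac{1}{\tht})\al+\ceil{k/\tht}$) does \emph{not} use arborescence packing at this point. Instead it uses the red/blue-edge decomposition of Definition~\ref{redblue} and Lemmas~\ref{lem:redcost},~\ref{lem:cross}: each path in the support of $x$ is shortcut past its non-sentinel nodes, producing a \emph{distance-increasing} fractional flow on sentinels; the cut-requirement function $h(S)=1$ iff no node in $S$ is covered to extent $\ge\tht$ by sentinels inside $S$ is downwards-monotone, so the LP-relative $2$-approximation of~\cite{GoemansW94} yields a forest $F$ of $c$-cost at most $\frac{3R}{1-\tht}\cdot k$ that integrally attaches each node to a well-covered witness (this is where the $\frac{6}{1-\tht}\al$ comes from after doubling $F$); contracting the components of $F$ leaves an \emph{acyclic} flow that can be rounded by flow integrality to $\ceil{k/\tht}$ integral paths of total regret at most $\frac{R}{\tht}\cdot\al R$, and Lemma~\ref{avg2max} closes the argument.

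Two specific problems with your version. First, Theorem~\ref{arbpack} gives no control whatsoever over the (weighted) \emph{number of leaves} of the returned arborescences, and the claim that this is ``tied to $k$'' by a conservation argument is false: an out-arborescence with a single arc out of $r$ (so it contributes only $1$ to the out-degree at $r$) can have arbitrarily many leaves, e.g.\ $r\to a$ followed by $m$ arcs $a\to b_i$. The total excess of a preflow at nodes of $V$ equals the out-degree of $r$, not the leaf count, so your leaf-count bookkeeping does not go through, and without it neither the ``expand every tree into its root-to-leaf paths'' step nor the final Lemma~\ref{avg2max} accounting controls the number of output paths. Second, the passage from a fractional family of cores/tails covering each node to extent $\ge\tht$ (resp.\ $>1-\tht$) to an \emph{integral} subcollection with only a $\frac1\tht$ (resp.\ $\frac1{1-\tht}$) blowup is exactly the set-cover-type rounding you correctly worry about; ``rescaling'' a fractional solution by $\frac1\tht$ does not by itself produce an integral cover without a $\Theta(\log n)$ loss. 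The actual proof sidesteps this by using two different mechanisms for the two pieces — an LP-relative GW forest for attaching nodes to sentinels, and flow integrality on an acyclic digraph for selecting the sentinel paths — neither of which is available in your arborescence-packing framework without the distance-increasing sentinel structure. Your observation that $c^\reg_{rv}=0$ (so one can re-root for free) is correct and is indeed the reason Lemma~\ref{avg2max} has no wasted cost, but it does not substitute for the missing integral-selection argument.
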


\paragraph{Preflows and arborescence packing.}
Let $D=(\{r\}\cup V,A)$ be a digraph. We say that a vector $x\in\R_+^A$ is an 
{\em $r$-preflow} if $x\bigl(\dt^\into(v)\bigr)\geq x\bigl(\dt^\out(v)\bigr)$ for all
$v\in V$. When $r$ is clear from the context, we simply say preflow. 
A key tool that we exploit is an arborescence-packing result of Bang-Jensen et
al.~\cite{BangjensenFJ95} showing that we can decompose a preflow into
out-arborescences rooted at $r$, 
and this can be done in polytime~\cite{PostS15}. By an 
{\em out-arborescence rooted at $r$}, we mean a subgraph  
$B$ whose undirected version is a tree containing $r$, and where every node spanned by $B$  
except $r$ has exactly one incoming arc in $B$.  

\begin{theorem}[\cite{BangjensenFJ95,PostS15}] \label{arbpack}
Let $D=(\{r\}\cup V,A)$ be a digraph and $x\in\R_+^A$ be a preflow. Let
$\ld_v:=\min_{\{v\}\sse S\sse V}x\bigl(\dt^\into(S)\bigr)$ be the $r\leadsto v$
``connectivity'' in $D$ under capacities $\{x_a\}_{a\in A}$. Let $K>0$ be
rational. We can obtain out-arborescences $B_1,\ldots,B_q$ rooted at $r$, and rational
weights $\gm_1,\ldots,\gm_q\geq 0$ such that $\sum_{i=1}^q\gm_i=K$, 
$\sum_{i:a\in B_i}\gm_i\leq x_a$ for all $a\in A$, and 
$\sum_{i:v\in B_i}\gm_i=\min\bigl\{K,\ld_v\}$ for all $v\in V$. Moreover, such a
\mbox{decomposition can be computed in time $\poly(|V|,\text{size of $K$})$.}
\end{theorem}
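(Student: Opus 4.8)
The plan is to reduce the statement to a purely combinatorial, integral packing problem and then to invoke a reachability-refined form of Edmonds' disjoint-branchings theorem. Since $x$ and $K$ are rational, I would first scale by a common denominator $N$, replacing $x$ by $Nx\in\Z_+^A$ and $K$ by $NK\in\Z_+$, and pass to the multigraph $\hat D$ in which arc $a$ is replaced by $x_aN$ parallel copies; the $r\leadsto v$ connectivity in $\hat D$ is then $N\ld_v$. It suffices to find $NK$ arc-disjoint out-arborescences of $\hat D$ rooted at $r$ (recording, for each distinct arborescence, how many copies are used) such that every $v\in V$ lies in exactly $\min\{NK,N\ld_v\}$ of them: collapsing parallel copies and dividing the multiplicities by $N$ recovers arborescences $B_1,\dots,B_q$ in $D$ with rational weights $\gm_i$ summing to $K$, with $\sum_{i:a\in B_i}\gm_i\le x_a$, and with $\sum_{i:v\in B_i}\gm_i=\min\{NK,N\ld_v\}/N=\min\{K,\ld_v\}$. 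I would also delete all arcs entering $r$: no out-arborescence rooted at $r$ uses such an arc, deleting them leaves every $\ld_v$ ($v\in V$) unchanged, and it only helps the preflow inequalities. So from now on assume $x\in\Z_+^A$, $K\in\Z_+$, and $\dt^\into(r)=\es$.

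The first substantive step is to turn the \emph{non-spanning} requirement into a spanning one. Introduce a super-root $r^+$, an arc $(r^+,r)$ of capacity $K$, and, for each $v\in V$, an arc $(r^+,v)$ of capacity $\max\{0,\,K-\ld_v\}$; call the resulting digraph $D^+$ with capacity vector $x^+$. For any nonempty $S\sse V$ and any $v\in S$ we have $\ld_v\le x\bigl(\dt^\into(S)\bigr)$ by definition, so $x\bigl(\dt^\into(S)\bigr)\ge\max_{v\in S}\ld_v$; a short case analysis on whether this maximum is at least $K$ then gives $x^+\bigl(\dt^\into_{D^+}(S)\bigr)\ge K$, and the same bound is immediate for every $S$ containing $r$ since $(r^+,r)$ enters it. By the capacitated version of Edmonds' disjoint-branchings theorem, $D^+$ contains $K$ arc-disjoint out-arborescences $B^+_1,\dots,B^+_K$ spanning $\{r^+\}\cup\{r\}\cup V$. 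Since $\dt^\into(r)=\es$, each $B^+_i$ contains $(r^+,r)$; deleting $r^+$ and keeping the subtree rooted at $r$ gives $K$ arc-disjoint out-arborescences $B_1,\dots,B_K$ of $D$ rooted at $r$, with $v\in B_i$ precisely when the $r^+$–$v$ path in $B^+_i$ begins with the arc $(r^+,r)$.

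The main obstacle is controlling the node-coverage exactly. The upper bound ``$v$ lies in at most $\min\{K,\ld_v\}$ of the $B_i$'' is easy: there are only $K$ of them, and if $S^*$ attains $\ld_v$ then every $r$–$v$ path uses an arc of $\dt^\into(S^*)$, whose total capacity is $\ld_v$, so at most $\ld_v$ arc-disjoint arborescences can reach $v$. The lower bound is the crux, because plain Edmonds does not prescribe which arc out of $r^+$ reaches a given $v$. Here I would invoke the stronger, reachability-based form of the disjoint-branchings theorem (in the spirit of~\cite{BangjensenFJ95}): the $K$ arc-disjoint $r^+$-branchings may be chosen so that $v$ is reached along a path through $r$ in exactly $\min\{K,\ld_v\}$ of them. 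I expect this to be proved either by an uncrossing/exchange argument on a set $S$ certifying a deficiency, or by phrasing ``one more arborescence'' as a matroid-union (equivalently submodular-flow) feasibility question and checking the relevant intersecting-supermodular inequality; this is also where the preflow hypothesis, which yields $x\bigl(\dt^\into(S)\bigr)\ge x\bigl(\dt^\out(S)\bigr)$ for all $S\sse V$, is convenient.

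It remains to obtain the polynomial running time and a polynomial-size decomposition, which is the part attributed to~\cite{PostS15}. Simply unscaling a classical Edmonds decomposition can produce super-polynomially many arborescences, so instead I would peel off arborescences one at a time: at each stage find a single out-arborescence $B$ spanning all nodes still reachable from $r$, assign it the largest weight $\gm$ for which the residual vector $x-\gm\cdot\mathbf 1_B$ is still a preflow whose target connectivities $\min\{K_{\mathrm{cur}}-\gm,\ld^{\mathrm{cur}}_v\}$ have not overshot, subtract, and recurse. Maintaining the invariant that the weighted arborescences extracted so far, plus the quantities $\min\{K_{\mathrm{cur}},\ld^{\mathrm{cur}}_v\}$, sum to $\min\{K,\ld_v\}$ at every node justifies halting once $K_{\mathrm{cur}}=0$; and since each extraction either drives a residual arc capacity to $0$ or makes some $\ld^{\mathrm{cur}}_v$ meet its cap, only $\poly(|V|)$ stages occur, so $q=\poly(|V|)$ and the total work (each stage being a min-cost-flow / parametric computation) is $\poly(|V|,\text{size of }K)$.
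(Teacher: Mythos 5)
The paper does not supply a proof of Theorem~\ref{arbpack}: it is quoted from~\cite{BangjensenFJ95} (for existence) and~\cite{PostS15} (for the polynomial-time, polynomial-support computation). So the question is whether your argument stands on its own, and there is a genuine gap at exactly the step you flag as the crux.

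The super-root construction plus plain capacitated Edmonds does not deliver the lower bound on coverage through $r$. The capacity $\max\{0,K-\ld_v\}$ on the arc $(r^+,v)$ bounds how often \emph{that particular arc} is used, but an $r^+$-arborescence can reach $v$ via some other arc $(r^+,w)$ with $w\in V$, avoiding both $(r^+,v)$ and $r$. Concretely, take $V=\{u,v\}$, arcs $(r,u)$ and $(u,v)$ each with $x$-value $1$ (this is a preflow, and $\ld_u=\ld_v=1$), and $K=2$. In $D^+$, arc $(r^+,r)$ has capacity $2$ while $(r^+,u)$ and $(r^+,v)$ each have capacity $1$. Then $B^+_1=\{(r^+,r),(r,u),(r^+,v)\}$ and $B^+_2=\{(r^+,r),(r^+,u),(u,v)\}$ form a valid Edmonds decomposition---two arc-disjoint spanning $r^+$-arborescences respecting every capacity---yet $v$ lies in the $r$-subtree of neither, so your extraction covers $v$ with weight $0$, not $\min\{K,\ld_v\}=1$. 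The ``stronger, reachability-based form'' you then appeal to is not a separate black box one can cite: restricting each $B^+_i$ to its $r$-subtree, the statement that $v$ lies there exactly $\min\{K,\ld_v\}$ times is essentially the theorem being proved, so the argument as written is circular unless you actually carry out the inductive/uncrossing or submodular-flow argument you gesture at---which is precisely the content of~\cite{BangjensenFJ95}. Your final paragraph, on peeling off arborescences one at a time to keep the support polynomial, is the right idea and close to what~\cite{PostS15} do, but it presupposes the existence step that is missing.
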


\section{Rooted orienteering} \label{sec:rt_orient}
In the {\em rooted orienteering} problem, we have a complete undirected graph  
$G=(\{r\}\cup V,E)$, metric edge costs $\{c_{uv}\}$, a distance bound $B \geq 0$, and
nonnegative node rewards $\{\rho(v)\}_{v \in V}$. 
The goal is to find a 
rooted path with cost at most $B$ that collects the maximum reward. 
Whereas all current approaches for orienteering rely on a dynamic program to
stitch together suitable subpaths, we present a simple LP-rounding-based $3$-approximation
{algorithm for rooted orienteering.} 

Let $D=(\{r\}\cup V,A)$ denote the bidirected version of $G$, 
where both $(u,v)$ and $(v,u)$ get cost $c_{uv}$.
To introduce our LP and our rounding algorithm, first suppose that we know a node $v$ on
the optimum path that has maximum distance $\dist_v$ among all nodes on the optimum path.  
In our relaxation, we model the path as one unit of flow $x\in\R_+^A$ that exits $r$,
visits only nodes $u$ with $\dist_u\leq\dist_v$ and $v$ to an extent of 1,
and has cost at most $B$. Since we do not know the endpoint of our path, we relax $x$ to
be a preflow. Letting $z^v_u$ denote the $r\leadsto u$ connectivity (under capacities
$\{x_a\}$), the reward earned by $x$ is $\rewd(x):=\sum_{u\in V}\rho(u)z^v_u$.

Our rounding procedure is based on the insight that Theorem~\ref{arbpack} 
allows us to view $x$ as a convex combination of  
arborescences, which we regard as $r$-rooted trees in $G$. Converting each tree into an
$r$-$v$ path (by standard doubling and shortcutting), we get a convex combination of
rooted paths of average reward $\rewd(x)$, and average cost at most $2B-\dist_v$,
and hence average $c^{\reg}$-cost at most $2(B-\dist_v)$. Applying Lemma~\ref{avg2max} to
this collection, we then obtain a weighted collection of rooted paths of total weight at
most 3 earning the same total reward, where each path has regret at most $B-\dist_v$, and
hence, cost at most $B$ (since it ends at some node $u$ with $\dist_u\leq\dist_v$). Thus,
the maximum-reward path in this collection yields a feasible solution with reward at least
$\rewd(x)/3$. 

Finally, we circumvent the need for ``guessing'' $v$ by using variables
$z^v_v$ to indicate if $v$ is the maximum-distance node on the optimum path. We impose
that we have a preflow $x^v$ of value $z^v_v$ that visits $v$ to an extent of $z^v_v$,
and only visits nodes $u$ with $\dist_u\leq\dist_v$, and $z^v_u$ is now the $r\leadsto u$
{connectivity under capacities $x^v$.} (Note that $r\notin V$.)
\begin{alignat}{3}  
\max & \quad & \sum_{u,v\in V}\rho(u)z^v_u & \tag{{R-O}} \label{lp:root_or} \\
\text{s.t.} && 
x^v(\dt^\into(u)\bigr) & \geq x^v\bigl(\dt^\out(u)\bigr) \qquad && 
\forall u,v\in V \label{pref1} \\
&& x^v\bigl(\dt^\into(u)\bigr) & = 0 \qquad && 
\forall u,v\in V: \dist_u>\dist_v \label{pref2} \\
&& x^v\bigl(\delta^{\into}(S)\bigr) & \geq z^v_u \qquad 
&& \forall v\in V, S\subseteq V, u\in S \label{con} \\
&& \sum_{a\in A}c_ax^v_a & \leq B z^v_v  && \forall v \in V \label{cbound} \\
&& x^v\bigl(\dt^\out(r)\bigr) & =z^v_v \quad \forall v\in V,
&& \sum_{v} z^v_v = 1, \quad x,z \geq 0. \notag
\end{alignat}
This formulation can be converted to a compact LP by 
introducing flow variables $f^{u,v}=\{f^{u,v}_a\}_{a\in A}$, and encoding the cut
constraints \eqref{con} by imposing that $f^{u,v}\leq x^v$, and that
$f^{u,v}$ sends $z^v_u$ units of flow from $r$ to $u$. 
Observe that: (a) if $\dist_u>\dist_v$ then 
$z^v_u\leq x^v\bigl(\dt^\into(u)\bigr)=0$; 
(b) we have $z^{v}_u\leq x^v\bigl(\dt^\into(V)\bigr)=x^v\bigl(\dt^\out(r)\bigr)=z^{v}_v$ 
for all $u,v$. 
Let $(x^*,z^*)$ be an optimal solution to \eqref{lp:root_or}, of value $\OPT$. 

\begin{theorem}\label{thm:rt_gap}
We can round $(x^*,z^*)$ to a rooted-orienteering solution of value at least $\OPT/3$. 
\end{theorem}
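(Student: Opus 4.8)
The plan is to follow the rounding intuition sketched just before the theorem statement, made rigorous using Theorem~\ref{arbpack} and Lemma~\ref{avg2max}. First I would fix the node $v$ with $z^*_v := z^{*v}_v > 0$ that will ultimately give the best rounded solution — but really, since $\sum_v z^{*v}_v = 1$, I would argue that it suffices to produce, for \emph{each} $v$ with $z^{*v}_v > 0$, a feasible rooted-orienteering path of reward at least $\bigl(\sum_{u} \rho(u) z^{*v}_u\bigr)/3$, and then take the best such path over all $v$; since $\sum_v \sum_u \rho(u) z^{*v}_u = \OPT$, the best one has reward at least $\OPT/3$. So the crux is a single value of $v$.

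Fix such a $v$, and consider the preflow $x^{*v}$, scaled by $1/z^{*v}_v$ so that it has value exactly $1$ out of $r$ (this keeps it a preflow, scales costs to at most $B$ by \eqref{cbound}, and scales the connectivities $z^{*v}_u/z^{*v}_v =: \ld_u$, which by constraint \eqref{con} and observation (b) lie in $[0,1]$ and are at most the true $r \leadsto u$ connectivity under the scaled capacities). I would then apply Theorem~\ref{arbpack} with $K = 1$ to the scaled preflow in the digraph $D$ restricted to nodes $u$ with $\dist_u \le \dist_v$ (constraint \eqref{pref2} guarantees $x^{*v}$ lives on this sub-digraph, and $v$ itself is included since $\dist_v \le \dist_v$). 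This yields out-arborescences $B_1, \dots, B_q$ rooted at $r$ with weights $\gm_1, \dots, \gm_q \ge 0$ summing to $1$, with $\sum_{i : a \in B_i}\gm_i \le x^{*v}_a/z^{*v}_v$ for all arcs $a$, and $\sum_{i : u \in B_i}\gm_i = \min\{1, \ld_u'\} \ge \ld_u$ where $\ld_u'$ is the connectivity in the sub-digraph. In particular $\sum_{i : u \in B_i}\gm_i \ge z^{*v}_u/z^{*v}_v$ for every $u$, so the weighted reward $\sum_i \gm_i \rho(B_i) \ge \sum_u \rho(u) z^{*v}_u / z^{*v}_v$; and the weighted cost $\sum_i \gm_i c(B_i) \le \sum_a c_a x^{*v}_a/z^{*v}_v \le B$.

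Next I would convert each arborescence $B_i$ — viewed as an $r$-rooted tree in $G$ — into a rooted $r$-$v$ path by the standard doubling-and-shortcutting argument: double all edges to get an Eulerian multigraph, take an Euler tour starting and ending at $r$ that visits $v$, shortcut repeated nodes, and then take the $r$-$v$ prefix (or cut the tour at $v$), obtaining an $r$-$v$ path $P_i$ spanning all of $B_i$ of cost at most $2 c(B_i)$. By the triangle inequality this path has cost at most $2 c(B_i)$, hence the weighted cost over $i$ is at most $2B$; since every $P_i$ ends at $v$ with $\dist_v$ fixed, the weighted \emph{regret} $\sum_i \gm_i c^{\reg}(P_i) = \sum_i \gm_i(c(P_i) - \dist_v) \le 2B - \dist_v \le 2(B - \dist_v)$ (using $\dist_v \le B$, which holds because $v$ lies on some path of cost $\le B$, or more directly from \eqref{cbound} since $c_{rv} \le \sum_a c_a x^{*v}_a / z^{*v}_v \le B$ as $x^{*v}$ routes $z^{*v}_v$ flow reaching $v$). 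Now apply Lemma~\ref{avg2max} with $R := B - \dist_v$ to the collection $\{P_i\}$ with weights $\{\gm_i\}$: the total weight is $1$ and total regret is $\al R$ with $\al \le 2$, so we get a weighted collection of at most $1 + 2 = 3$ rooted paths, each of regret at most $B - \dist_v$, covering $\bigcup_i P_i$ and hence earning total reward at least $\sum_u \rho(u) z^{*v}_u / z^{*v}_v \ge$ (reward of the scaled fractional solution for this $v$). Each such path ends at some node $u$ with $\dist_u \le \dist_v$ (all nodes touched have this property), so its cost is regret $+\, \dist_u \le (B - \dist_v) + \dist_v = B$: it is feasible. The best of these $\le 3$ paths has reward at least a third of the total, i.e. at least $\bigl(\sum_u \rho(u) z^{*v}_u\bigr)/(3 z^{*v}_v) \ge \bigl(\sum_u \rho(u) z^{*v}_u\bigr)/3$ since $z^{*v}_v \le 1$.

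I expect the main obstacles to be bookkeeping rather than conceptual. The subtlety to get right is the scaling by $z^{*v}_v$ and verifying that Lemma~\ref{avg2max} is applied with the correct normalization — the lemma is stated for unit-weight paths, so I would either first note that Lemma~\ref{avg2max}'s proof handles weighted collections (as implicitly used elsewhere), or reduce to the unit-weight case by taking the weights $\gm_i$ to be rational (guaranteed by Theorem~\ref{arbpack}) and duplicating. A second point requiring care is confirming that when Lemma~\ref{avg2max} merges paths, the resulting paths still only visit nodes $u$ with $\dist_u \le \dist_v$ (true since they cover exactly $\bigcup_i P_i \subseteq \{u : \dist_u \le \dist_v\}$) and still start at $r$ — both are immediate from the lemma's statement. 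Finally I would remark that the same argument applies verbatim to the compact reformulation via the flow variables $f^{u,v}$, since those exactly encode the cut constraints \eqref{con}, so an optimal solution to the compact LP has the same value $\OPT$ and the same rounding goes through.
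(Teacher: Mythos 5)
Your approach matches the paper's: decompose each preflow $x^{*v}$ into arborescences via Theorem~\ref{arbpack}, convert them to $r$-$v$ paths by doubling and shortcutting, split each into low-regret pieces via Lemma~\ref{avg2max}, and select the best resulting path. However, the step where you combine the per-$v$ guarantees contains a genuine logical gap.

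You derive the correct per-$v$ bound, namely that for each $v$ with $z^{*v}_v>0$ one obtains a feasible path of reward at least $\bigl(\sum_u\rho(u)z^{*v}_u\bigr)\big/\bigl(3z^{*v}_v\bigr)$, but you then discard the $z^{*v}_v$ in the denominator (``$\geq \bigl(\sum_u\rho(u)z^{*v}_u\bigr)/3$ since $z^{*v}_v\leq 1$''). This weakening breaks the conclusion: it is \emph{not} true that having, for each $v$, a path of reward $\geq a_v/3$ with $\sum_v a_v=\OPT$ implies the best path has reward $\geq\OPT/3$ --- if the $a_v$ are spread evenly over $n$ nodes, the best path only earns $\OPT/(3n)$. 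The constraint $\sum_v z^{*v}_v=1$, which you mention but never actually use, is precisely what rescues the argument: keep the stronger bound and take a weighted average with weights $z^{*v}_v$, giving
\[
\max_v \frac{\sum_u\rho(u)z^{*v}_u}{3z^{*v}_v}
\;\geq\;
\sum_v z^{*v}_v\cdot\frac{\sum_u\rho(u)z^{*v}_u}{3z^{*v}_v}
\;=\;
\frac{\sum_{v,u}\rho(u)z^{*v}_u}{3}
\;=\;
\OPT/3.
\]
This is exactly what the paper does in one shot by comparing the total reward $\sum_{v,T}\gm^v_T\rho(P^v_T)\geq\OPT$ against the total number of pieces $\sum_{v,T}\gm^v_T|\Pc^v_T|\leq 3\sum_v z^{*v}_v=3$.

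A secondary, smaller issue: ``double all edges, take an Euler tour from $r$, and take the $r$-$v$ prefix'' does not produce an $r$-$v$ path spanning all of $B_i$, since the prefix up to the first visit of $v$ need not cover the whole tree. The standard fix (and what the paper does) is to double only the edges \emph{not} on the $r$-$v$ path of the tree, take the resulting Euler $r$-$v$ trail, and shortcut; this also yields the slightly sharper bound $c(P_i)\leq 2c(B_i)-\dist_v$, though your weaker bound $2c(B_i)$ suffices for the regret estimate as you observed. The remaining bookkeeping --- scaling by $z^{*v}_v$, restricting to nodes with $\dist_u\leq\dist_v$, verifying feasibility of the returned path --- is handled correctly.
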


\begin{proofnobox}
For each $v$ with $z^{*v}_v > 0$ we apply Theorem \ref{arbpack} with $K = z^{*v}_v$ to obtain
$r$-rooted out-arborescences, which we view as rooted trees in $G$, 
and associated nonnegative weights 
$\{\gm^v_T\}_{T\in\T}$; recall that $\T$ is the collection of all 
$r$-rooted trees. So we have
$\sum_T\gm^v_T=z^{*v}_v$, $\sum_T\gm^v_Tc(T)\leq\sum_a c_ax^{*v}_a\leq Bz^{*v}_v$, 
and $\sum_{T:u\in T}\gm^v_T\geq z^{*v}_u$ for all $u\in V$.
Note that for every $T$ with $\gm_v^T>0$, we have $v\in T$, and $\dist_u\leq\dist_v$ for
all $u\in T$ (as otherwise, we have $x^{*_v}\bigl(\dt^\into(u)\bigr)=0$).
For every $v$ and every tree $T$ with $\gm^v_T>0$, we do the following. 
First, we double the edges not lying on the $r$-$v$ path of $T$
and shortcut to obtain a simple $r$-$v$ path $P^v_T$. 
So 
\begin{equation} 
\sum_T \gm^v_Tc^{\reg}(P^v_T)\leq 2\sum_T\gm^v_T\bigl(c(T)-\dist_v\bigr)
= 2z^{*v}_v(B-\dist_v). \label{orineq1}
\end{equation}
Next, 
we use Lemma~\ref{avg2max} with regret-bound $B-\dist_v$ to break $P^v_T$ into
a collection $\Pc^v_T$ of at most $1 + \frac{c^{\reg}(P^v_T)}{B-\dist_v}$
rooted paths, each having $c^{\reg}$-cost at most $B-\dist_v$. 
Note that if $B=\dist_v$, then $c^{\reg}(P^v_T) = 0$, and we use the convention that
$0/0=0$, so $|\Pc^v_T|=1$ in this case.  
Each path in $\Pc^v_T$ ends at a vertex $u$ with $\dist_u\leq\dist_v$, 
so its $c$-cost is at most $B$. 
Now, for all $v\in V$, we have
\begin{align} 
\sum_T\gm^v_T\sum_{P\in\Pc^v_T}\rho(P) & =\sum_T\gm^v_T\rho(P^v_T)\geq\sum_u\rho(u)z^{*v}_u 
\label{orineq2} \\
\sum_T\gm^v_T|\Pc^v_T| & \leq\sum_T\gm^v_T\Bigl(1+\tfrac{c^{\reg}(P^v_T)}{B-\dist_v}\Bigr)
\leq z^{*v}_v+2z^{*v}_v=3z^{*v}_v \label{orineq3}
\end{align}
where the last inequality in \eqref{orineq3} follows from \eqref{orineq1}.
Therefore, the maximum-reward path in $\bigcup_{v,T:\gm^v_T>0}\Pc^v_T$ earns reward at least
\vspace{-2ex}
\begin{equation}
\Bigl({\sum_{v,T}\gm^v_T\sum_{P\in\Pc^v_T}\rho(P)}\Bigr)\Bigl/
\Bigl({\sum_{v,T}\gm^v_T|\Pc^v_T|}\Bigr)
\geq\tfrac{\sum_{v,u}\rho_uz^{*v}_u}{3\sum_vz^{*v}_v}=\OPT/3. \qedhere
\end{equation}
\end{proofnobox}

\begin{remark}
The above algorithm and analysis also show that the weaker LP where we replace the
constraints $x^v\bigl(\dt^\out(r)\bigr)=z^v_v$ for all $v\in V$, $\sum_v z^v_v=1$ with
$\sum_vx^v\bigl(\dt^\out(r)\bigr)=1$,\ $z^v_u\leq z^v_v$ for all $u,v\in V$, also has
integrality gap at most $3$. 
\end{remark}

\paragraph{Regret orienteering.}
The following variant of rooted orienteering, which we call {\em regret orienteering}, 
will be useful in Section~\ref{sec:ptp_orient}. In regret orienteering, 
instead of a cost bound $B$, we are given a {\em regret bound} 
$R$, and we seek a rooted path of {\em regret} at most $R$ that collects the maximum
reward. The LP-relaxation for regret-orienteering is very similar to \eqref{lp:root_or}; 
the only changes are that $z^v_v$ now indicates if $v$ is the {\em end node} of the
optimum path, and so we drop \eqref{pref2} and replace \eqref{cbound} with 
$\sum_{a\in A}c_ax^v_a\leq (\dist_v+R)z^v_v$. The rounding algorithm is essentially
unchanged: we convert the trees obtained from $x^v$ into $r$-$v$ paths, which are then
split into paths of regret at most $R$. Theorem~\ref{thm:rt_gap} yields the following
corollary.

\begin{corollary} \label{regorthm}
There is an LP-based $3$-approximation for regret orienteering.
\end{corollary}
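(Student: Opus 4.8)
The plan is to mirror the proof of Theorem~\ref{thm:rt_gap} almost verbatim, adjusting only the places where the cost bound $B$ is replaced by the regret bound $R$. First I would set up the analogous LP for regret orienteering: we keep the preflow constraints \eqref{pref1}, the cut constraints \eqref{con}, and the flow-conservation-at-$r$ constraints $x^v\bigl(\dt^\out(r)\bigr)=z^v_v$, $\sum_v z^v_v=1$; we drop \eqref{pref2} (since now $v$ marks the end node of the optimum path, not the farthest node, so nodes of larger $\dist$-value may legitimately appear); and we replace \eqref{cbound} with $\sum_{a\in A}c_a x^v_a\le(\dist_v+R)z^v_v$. The validity of this relaxation follows because for the optimal path ending at some node $v$, routing one unit of flow along that path (directed away from $r$) is a preflow of value $1$ visiting $v$ to extent $1$ and having $c$-cost $c(P)=\dist_v+c^\reg(P)\le\dist_v+R$.

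Next I would run the rounding. For each $v$ with $z^{*v}_v>0$, apply Theorem~\ref{arbpack} with $K=z^{*v}_v$ to the preflow $x^{*v}$, obtaining rooted trees $\{T\}$ with weights $\{\gm^v_T\}$ satisfying $\sum_T\gm^v_T=z^{*v}_v$, $\sum_T\gm^v_T c(T)\le\sum_a c_a x^{*v}_a\le(\dist_v+R)z^{*v}_v$, and $\sum_{T:u\in T}\gm^v_T\ge z^{*v}_u$ for all $u$. Since $z^{*v}_v=x^{*v}\bigl(\dt^\into(v)\bigr)\ge x^{*v}$-connectivity to $v$ bounds from below the weight of trees containing $v$, and in fact by constraint \eqref{con} with $S=\{v\}$ we get $\sum_{T:v\in T}\gm^v_T\ge z^{*v}_v$, so $v\in T$ for every $T$ with $\gm^v_T>0$. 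Double the edges off the $r$-$v$ path of $T$ and shortcut to get a simple $r$-$v$ path $P^v_T$ with $c^\reg(P^v_T)=c(P^v_T)-\dist_v\le 2c(T)-2\dist_v$, hence $\sum_T\gm^v_T c^\reg(P^v_T)\le 2z^{*v}_v R$ — this is the exact analogue of \eqref{orineq1} but now the bound is $2R$ rather than $2(B-\dist_v)$, because $c(T)-\dist_v$ is summed to at most $R z^{*v}_v$ directly.

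Then I would apply Lemma~\ref{avg2max} with regret bound $R$ to split each $P^v_T$ into at most $1+c^\reg(P^v_T)/R$ rooted paths, each of regret at most $R$, which is exactly the feasibility requirement for regret orienteering (no appeal to $\dist_u\le\dist_v$ is needed here, unlike in the cost-bounded case). The reward and cardinality bookkeeping is identical to \eqref{orineq2}--\eqref{orineq3}: the reward collected over all split paths from a given $v$ is at least $\sum_u\rho(u)z^{*v}_u$, and $\sum_T\gm^v_T|\Pc^v_T|\le z^{*v}_v+2z^{*v}_v=3z^{*v}_v$. Summing over $v$ and taking the maximum-reward path among all pieces yields a feasible regret-orienteering solution of reward at least $\bigl(\sum_{v,u}\rho(u)z^{*v}_u\bigr)/\bigl(3\sum_v z^{*v}_v\bigr)=\OPT/3$. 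The compact reformulation via flow variables $f^{u,v}\le x^v$ carries over unchanged, so the whole procedure runs in polynomial time.

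I do not anticipate a genuine obstacle here — the statement is essentially a corollary, and the only point that needs a moment's care is confirming that dropping \eqref{pref2} and changing \eqref{cbound} still leaves a valid relaxation and still supports every step of the rounding. The one substantive difference from Theorem~\ref{thm:rt_gap} is that we no longer conclude each output path has $c$-cost at most $B$ (there is no such bound in this problem); we only need each output path to have regret at most $R$, which Lemma~\ref{avg2max} delivers directly. So the ``main obstacle'' is really just bookkeeping: making sure the constant stays at $3$ when the factor-$2$ loss from doubling is applied to $c(T)-\dist_v$ (bounded by $R z^{*v}_v$) rather than to $c(T)$ itself.
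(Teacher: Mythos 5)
Your proposal is correct and follows essentially the same route as the paper, which also obtains this corollary by dropping constraint \eqref{pref2}, replacing \eqref{cbound} with $\sum_a c_a x^v_a\le(\dist_v+R)z^v_v$, and rerunning the rounding of Theorem~\ref{thm:rt_gap} with the paths split to regret bound $R$. Your bookkeeping (the $2Rz^{*v}_v$ regret bound and the $3z^{*v}_v$ count) and your observation that the $\dist_u\le\dist_v$ argument is no longer needed both match the paper's intent exactly.
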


\paragraph{Integrality gaps for weaker LPs.}
Recall that in sketching our rounding algorithm, we assumed at first (for simplicity) that
we know the node $v$ on the optimum path that has maximum distance from the root, and
impose in our LP that $x$ is a preflow under which the $r\leadsto v$ connectivity is $1$,
and that $x$ only visits nodes $u$ with $\dist_u\leq\dist_v$. 
We conclude this section by demonstrating that it is {\em crucial} to impose {\em both}
these constraints. 
First, consider the following LP relaxation that simply encodes that $x$ is a preflow of
value 1 but does not require that the $r\leadsto v$ connectivity under $x$ is 1 for 
any specific $v \in V$. 
As before, we have variables $z_u$ for each $u \in V$ denoting the $r\leadsto u$
connectivity under $x$.  
\begin{alignat}{3}  
\max & \quad & \sum_{u \in V}\rho(u)z_u & \tag{{R-O2}} \label{lp:root_or_2} \\
\text{s.t.} && 
x(\dt^\into(u)\bigr) & \geq x\bigl(\dt^\out(u)\bigr) \qquad && 
\forall u\in V \notag \\
&& x\bigl(\delta^{\into}(S)\bigr) & \geq z_u \qquad 
&& \forall S\subseteq V, u\in S \notag \\
&& \sum_{a\in A}c_ax_a & \leq B  && \notag \\
&& x\bigl(\dt^\out(r)\bigr) & = 1 \quad \notag \\
&& \quad x,z \geq 0. \notag
\end{alignat}
The above LP has unbounded integrality gap. 
Consider the following instance where the budget $B \geq 1$ is an integer. 
We have the following metric over $V \cup \{r\}$ where 
$V = \{r', v_1, v_2, \ldots, v_B\}$: 
each $v_i$ is at distance $B$ from $r$, and the distance between any two distinct 
$v_i, v_j$ is 1; $r'$ is at distance $1$ from $r$ and distance $B-1$ from every $v_i$. 
Let $\rho(v)=1$ for all $v\in V$.
An $r$-rooted path of length $B$ may visit $r'$ and at most one other node in $V$, so the
optimum solution has value 2. 

Consider the preflow $x$ that sends $\frac{1}{2}$ unit of flow along the path
$P_1=r\rightarrow r'\rightarrow v_1\rightarrow v_2\rightarrow \ldots\rightarrow v_B$, and
$\frac{1}{2}$ unit of flow along the path $P_2=r\rightarrow r'$. Letting
$z_v=\frac{1}{2}$ for all  $v\in V\sm\{r'\}$ and $z_{r'}=1$, it is easy to verify that
$(x,z)$ is a feasible solution to \eqref{lp:root_or_2} and has objective value
$\frac{B}{2}+1$. Thus, the integrality gap of \eqref{lp:root_or_2} is unbounded. 
Now consider the modification of \eqref{lp:root_or_2}, where we select some node $v$
and impost that the $r\leadsto v$ connectivity under $x$ is 1 (i.e., we add the
constraint $z_v=1$), but do not require that $x$ only visits nodes $u$ with
$\dist_u\leq\dist_v$. This LP continues to have an unbounded integrality gap, since if   
$v=r'$, the above $(x,z)$ continues to be a feasible solution to this LP. 

\medskip
In the context of LP \eqref{lp:root_or}---where we avoid the need for guessing the
maximum-distance node on the optimum path by having a separate collection 
$\{x^v_a\}, \{z^v_u\}$ of variables for all $v\in V$---the above example shows 
that it is important to impose constraints \eqref{pref2} {\em and} \eqref{cbound}.
In the absence of constraint \eqref{pref2}, letting $x^{r'}$ be the preflow that sends   
$\frac{1}{2}$ unit of flow each along $P_1$ and $P_2$, and setting $z^{r'}_{r'}=1$,
$z^{r'}_v=\frac{1}{2}$ for all $v\in V\sm\{r'\}$, yields a feasible solution to the
resulting LP (of value $\frac{B}{2}+1$). 
On the other hand, retaining \eqref{pref2}, but {\em weakening} \eqref{cbound} to
$\sum_{v,a}c_ax^v_a\leq B$ also yields an unbounded integrality gap: letting $x^{v_B}$ 
be the preflow that sends $\frac{1}{2}$ unit of flow along $P_1$, and $x^{r'}$ be the
preflow that sends $\frac{1}{2}$ unit of flow along $P_2$ (and setting the $z^v_u$s
appropriately) yields a feasible solution to the resulting LP (of value
$\frac{B}{2}+1$).

\section{Point-to-point orienteering} \label{sec:ptp_orient}
We now consider the generalization of rooted orienteering, where we have a start
node $r$ {\em and} an end node $t$, and we seek an $r$-$t$ path with cost at most $B$ that
collects the maximum reward. We may assume that $r$ and $t$ have 
$0$ reward, i.e., $\rho(r)=\rho(t)=0$.
The main result of this section is a novel reduction showing that 
{\em point-to-point (\ptp) orienteering} problem can be {\em reduced} to regret
orienteering losing a factor of at most 2 (Theorem~\ref{p2predn}). 
Combining this with our LP-approach for regret orienteering and Corollary~\ref{regorthm},
we obtain an LP-relaxation for \ptp-orienteering having integrality gap
at most 6 (Section~\ref{append-p2plp}). 
We believe that the insights gained from this reduction may find further application.  

\begin{theorem} \label{p2predn}
An $\al$-approximation algorithm for regret orienteering (where $\al\geq 1$) can be used
to obtain a $2\al$-approximation algorithm for \ptp-orienteering.
\end{theorem}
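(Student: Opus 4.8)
\noindent\emph{Proof plan.} The plan is to reduce \ptp-orienteering to polynomially many regret-orienteering instances (some rooted at $r$, some at $t$), argue that one of these instances has a near-optimal solution, and—this is the crux—show that such a solution can always be completed to a budget-respecting $r$-$t$ path. Let $P^*$ be an optimal $r$-$t$ path, so $c(P^*)\le B$ and $\rho(P^*)=\OPT$, and set $R_0:=B-\dist_t$, so $c^\reg(P^*)=c(P^*)-\dist_t\le R_0$. The identity driving everything is: if $Q$ is an $r$-rooted path of regret at most $R$ ending at a node $u$, then appending the edge $ut$ produces an $r$-$t$ path of cost $c(Q)+c_{ut}\le(\dist_u+R)+c_{ut}=c^\reg_{ut}+\dist_t+R$, which is at most $B$ exactly when $c^\reg_{ut}\le R_0-R$. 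Thus feasibility of the completion is governed by the \emph{regret} metric, and there is slack $R_0-R$ available to pay for it.

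Next I would guess the node $m^*\in\{r,t\}\cup V$ maximizing $c^\reg_{vt}$ over the nodes $v$ of $P^*$, set $R^*:=R_0-c^\reg_{m^*t}=B-\dist_{m^*}-c_{m^*t}$ (nonnegative, since $m^*\in P^*$ forces $\dist_{m^*}+c_{m^*t}\le c(P^*)\le B$), and restrict attention to the node set $W:=\{u\in V: c^\reg_{ut}\le c^\reg_{m^*t}\}\cup\{r\}$. By the choice of $m^*$, every node of $P^*$ lies in $W$ (in particular $t\in W$, as $c^\reg_{tt}=0$). Split $P^*$ at $m^*$ into a prefix $P^*_1$ (from $r$ to $m^*$) and a suffix $P^*_2$ (from $m^*$ to $t$); then $c(P^*_1)+c(P^*_2)=c(P^*)\le B$ and $\rho(P^*_1)+\rho(P^*_2)\ge\OPT$, so one of the two halves carries reward at least $\OPT/2$.

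For the prefix, I would run the $\al$-approximation for regret orienteering with root $r$, regret bound $R^*$, node set $W$, and the original rewards. Since $c(P^*_1)\le B-c_{m^*t}$ (because $c_{m^*t}\le c(P^*_2)$), the path $P^*_1$ has regret $c(P^*_1)-\dist_{m^*}\le R^*$ and uses only nodes of $W$, hence is feasible; so we obtain an $r$-rooted path $Q_1$ of regret $\le R^*$ ending at some $u\in W$ with reward $\ge\rho(P^*_1)/\al$. Applying the identity above with $R=R^*$ and $c^\reg_{ut}\le c^\reg_{m^*t}=R_0-R^*$, appending $ut$ to $Q_1$ (then shortcutting to a simple $r$-$t$ path, which does not raise the cost and, since $\rho(t)=0$, does not lower the reward) gives a feasible \ptp-solution of reward $\ge\rho(P^*_1)/\al$. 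Symmetrically, for the suffix I would run the $\al$-approximation for regret orienteering with $t$ as the root, the same bound $R^*$, and node set $W$: the reverse of $P^*_2$ is a $t$-rooted path of regret $c(P^*_2)-c_{m^*t}\le R^*$ (using $c(P^*_2)\le B-\dist_{m^*}$) on nodes of $W$, so we get a $t$-rooted path $Q_2$ ending at some $u\in W$ with reward $\ge\rho(P^*_2)/\al$; reversing $Q_2$ and prepending the edge $ru$ yields an $r$-$t$ path of cost $\le c_{ru}+c_{ut}+R^*=c^\reg_{ut}+\dist_t+R^*\le c^\reg_{m^*t}+\dist_t+R^*=B$ and reward $\ge\rho(P^*_2)/\al$. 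Trying all $O(n)$ guesses for $m^*$ and both calls, and returning the best $r$-$t$ path found, gives reward at least $\max\{\rho(P^*_1),\rho(P^*_2)\}/\al\ge\OPT/(2\al)$ in polynomial time, which is the claimed $2\al$-approximation.

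The main obstacle is exactly that a regret-orienteering algorithm gives no control over where its path ends, so naively ``closing it off at $t$'' can overshoot the budget by as much as $\max_u(\dist_u+c_{ut})$. The resolution couples two observations: (i) the completion cost exceeds the endpoint-distance by precisely $c^\reg_{ut}+\dist_t$, so it is the regret metric, not the original metric, that controls feasibility; and (ii) guessing the node $m^*$ of $P^*$ that is regret-farthest from $t$ lets us at once (a) restrict to nodes no farther from $t$ than $m^*$ without dropping any node of $P^*$, and (b) leave exactly the slack $c^\reg_{m^*t}$ in the regret budget to pay for the completion—at the price of only being able to recover one regret-bounded half of $P^*$, which is where the factor $2$ comes from. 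The remaining steps are routine: checking that feasibility of $P^*_1$ (resp.\ reversed $P^*_2$) implies the stated reward lower bound for each call, and the shortcutting bookkeeping (e.g.\ splicing out a repeated or internal occurrence of $t$, which is cost-free since $c$ is a metric and $\rho(t)=0$).
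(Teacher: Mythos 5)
Your proof is correct and follows essentially the same route as the paper's: guess the node $m^*$ of $P^*$ maximizing $\dist_v+c_{vt}$ (equivalently, the regret distance $c^\reg_{vt}$), set regret budget $R^*=B-\dist_{m^*}-c_{m^*t}$, restrict to nodes with $\dist_u+c_{ut}\le\dist_{m^*}+c_{m^*t}$, solve two regret-orienteering instances rooted at $r$ and at $t$, and close off each by appending the single edge to $t$ (resp.\ from $r$), observing that one half of $P^*$ carries at least half the reward. The only differences are cosmetic (phrasing the slack in terms of $c^\reg_{\cdot t}$ rather than $\dist_\cdot+c_{\cdot t}$, and an explicit shortcutting remark).
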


\begin{proofbox}
Let $\bigl(G=(\{r,t\}\cup V,E),\{c_{uv}\},\{\rho(u)\},B\bigr)$ be an instance of
\ptp-orienteering. 
Our reduction is simple. Let $P^*$ be an optimal solution. 
We ``guess'' a node $v\in P^*$ (which could be $r$ or $t$) such that 
$\dist_v+c_{vt}=\max_{u\in P^*}(\dist_u+c_{ut})$. (That is, we enumerate over all choices
for $v$.) Let $S=\{u\in \{r,t\}\cup V:\dist_u+c_{ut}\leq\dist_v+c_{vt}\}$.
We then consider two regret orienteering problems, both of which have regret bound 
$R=B-\dist_v-c_{vt}$ and involve only nodes in $S$ (i.e., we equivalently set $\rho(u)=0$
for all $u\notin S$); the first problem has root $r$, and the second has root $t$.
Let $P_1$ and $P_2$ be the solutions obtained for these two problems respectively by our 
$\al$-approximation algorithm. So for some $u_1,u_2\in S$, $P_1$ is an $r$-$u_1$ path, and
$P_2$ may be viewed as a $u_2$-$t$ path. 
Notice that $P_1$ appended with the edge $u_1t$ yields an $r$-$t$ path of {\em cost} 
at most $\dist_{u_1}+c^\reg(P_1)+c_{u_1t}\leq\dist_{u_1}+c_{u_1t}+B-\dist_v-c_{vt}\leq B$, 
since $u_1\in S$.
Similarly $P_2$ appended with the edge $ru_2$ yields an $r$-$t$ path of cost at most
$B$. We return $P_1+u_1t$ or $ru_2+P_2$, whichever has higher reward.

To analyze this, we observe that the $r$-$v$ portion of $P^*$ is a feasible solution to
the regret-orienteering instance with root $r$, since its cost is at most $B-c_{vt}$, and
hence, its regret is at most $R$. Similarly, the $v$-$t$ portion of $P^*$ (viewed in
reverse) is a feasible solution to the regret-orienteering instance with root
$t$. Therefore, $\max\bigl\{\rho(P_1+u_1t),\rho(ru_2+P_2)\}\geq\rho(P^*)/2\al$.
\end{proofbox}

\subsection{LP-relaxation for \ptp-orienteering and rounding algorithm} \label{append-p2plp}
As in the case of rooted orienteering, we replace the ``guessing'' step by having an
indicator variables $z_v^v$ to denote if $v$ is the node with maximum $\dist_v+c_{vt}$ on
the optimum path. 
As suggested by the proof of Theorem~\ref{p2predn}, our LP then incorporates ideas from
the rooted-orienteering LP \eqref{lp:root_or} to encode that, our solution is, to an
extent of $z_v^v$, a combination of $r$-$v$ and $v$-$t$ paths of regret at most
$B-\dist_v-c_{vt}$, with respect to roots $r$ and $t$ respectively, that only visit nodes
$u$ with $\dist_u+c_{ut}\leq\dist_v+c_{vt}$.  
(We get some notational savings since we do not need to ``guess'' the endpoints of the
desired regret-$R$ paths: $v$ is the endpoint of both paths, and so we may work with
$r$-$v$ and  $v$-$t$ flows, instead of preflows.) 

Let $D=(V':=\{r,t\}\cup V,A)$ denote the bidirected version of $G$. (Again, both $(u,v)$
and $(v,u)$ get cost $c_{uv}$.)
For every $v\in V'$, 
we let $x^{rv}$ denote an $r$-$v$ flow of value $z^v_v$, and ${x}^{vt}$
denote a $v$-$t$ flow of value $z^v_v$. We impose that
$x^{rv}\bigl(\dt^\into(u)\bigr)=x^{vt}\bigl(\dt^\into(u)\bigr)=0$ 
whenever $\dist_u+c_{ut} > \dist_v + c_{vt}$.  
We use $z^{rv}_u$ and $z^{vt}_u$ to denote respectively the $r\leadsto u$ connectivity
under $x^{rv}$ and the $v\leadsto u$ connectivity under $x^{vt}$. So in an integral
solution. $z^{rv}_u$ and ${z}^{vt}_u$ indicate respectively if $u$ lies on the $r$-$v$
portion or on the $v$-$t$ portion of the optimum path.  
For nodes $v,p,q \in V'$ and $\kappa \geq 0$, define 
\begin{align*}
\mathcal F_v(p,q,\kappa):=\biggl\{
x\in\R_+^A: \ \ x\bigl(\delta^{\out}(p)\bigr)&=\kp=
x\bigl(\delta^{\into}(q)\bigr), \qquad 
x\bigl(\delta^{\into}(p)\bigr)=0=x\bigl(\delta^{\out}(q)\bigr) \\
x\bigl(\delta^{\into}(w)\bigr)&-x\bigl(\delta^{\out}(w)\bigr)=0 
\qquad \forall w\in V'\sm\{p,q\} \\
x\bigl(\dt^\into(w)\bigr)&=0 \qquad \forall w\in V': \dist_w+c_{wt}>\dist_v+c_{vt}
\biggr\}
\end{align*}
Note that if $\kp>0$, then $\F(u,u,\kp)=\es$ for every $u$. Recall that
$\rho(r)=\rho(t)=0$. 
\begin{alignat}{3}  
\max & \ \ & \sum_{u,v\in V'}\rho(u)(z^{rv}_u+z^{vt}_u) \tag{{P2P-O}} \label{lp:p2p_or} \\
\text{s.t.} & \quad & x^{rv}\!\in\!\F_v(r,v,z^v_v), \ x^{vt} & \!\in\!\F_v(v,t,z^v_v) 
\quad \ && \forall v\in V' \notag \\
&& x^{rv}\bigl(\delta^{\into}(S)\bigr) & \geq z^{rv}_u \qquad && 
\forall v\in V', S\subseteq V'\sm\{r\}, u\in S \label{rvcon} \\
&& x^{vt}\bigl(\delta^{\into}(S)\bigr) & \geq z^{vt}_u \qquad && 
\forall v\in V', S\subseteq V'\sm\{v\}, u\in S \label{vtcon} \\
&& \sum_{a\in A}c_a(x^{rv}_a+x^{vt}_a) & \leq B z^v_v  && \forall v \in V' \label{vbound} \\
&& \sum_{v\in V'} z^v_v & = 1, && x,z \geq 0. \notag
\end{alignat}
As before, we can model the cut constraints \eqref{rvcon}, \eqref{vtcon} using
additional flow variables and constraints to obtain a compact formulation.

As with \eqref{lp:root_or}, the constraints of \eqref{lp:p2p_or} imply that
$z^{rv}_u=z^{vt}_u=0$ if $\dist_u+c_{ut}>\dist_v+c_{vt}$, and $z^{rv}_u,z^{vt}_u\leq z^v_v$
for all $u$.
We remark that we could further add the constraint $z^{rv}_u + {z}^{vt}_u \leq z^{v}_v$
for $u,v \in V'$ to obtain a stronger relaxation whose integer solutions correspond to
feasible point-to-point orienteering solutions of the same cost. 
But we omit this since we do not need it in our rounding procedure.
Let $(x^*,z^*)$ be an optimal solution to \eqref{lp:p2p_or} and $\OPT$ be its value. 

\begin{theorem}\label{thm:ptp_gap}
We can round $(x^*,z^*)$ to a solution to \ptp-orienteering of value at least $\OPT/6$.
\end{theorem}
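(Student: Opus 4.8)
The plan is to run the rounding pipeline behind Theorem~\ref{thm:rt_gap} twice --- once on the family of $r$-$v$ flows $\{x^{*rv}\}$ and once on the family of $v$-$t$ flows $\{x^{*vt}\}$ --- producing two weighted collections of feasible $r$-$t$ paths, and then to return the single highest-reward path over both collections. The loss of $6$ decomposes as $2\cdot 3$: each run loses a factor $3$ exactly as in Theorem~\ref{thm:rt_gap} (decompose the preflow into arborescences via Theorem~\ref{arbpack}, turn each into a path by doubling and shortcutting, then split into regret-bounded pieces via Lemma~\ref{avg2max}), and an extra factor $2$ arises because the LP objective $\OPT=\sum_{u,v}\rho(u)(z^{*rv}_u+z^{*vt}_u)$ is split across the two collections, only one of which is kept. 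This is the LP-level analogue of the reduction in Theorem~\ref{p2predn}, where the two regret-orienteering instances rooted at $r$ and at $t$ play the roles of the two flow families.

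Fix $v$ with $z^{*v}_v>0$, set $R^v:=B-\dist_v-c_{vt}$ and $S_v:=\{u\in V':\dist_u+c_{ut}\le\dist_v+c_{vt}\}$. Since $x^{*rv}\in\F_v(r,v,z^v_v)$, at every node $w\neq r$ the inflow is at least the outflow (they are equal except at $v$, where the inflow is $z^{*v}_v$ and the outflow $0$); thus $x^{*rv}$ is an $r$-preflow, its $r\leadsto v$ connectivity is at least $z^{*v}_v$, and it routes no flow into any $w$ with $\dist_w+c_{wt}>\dist_v+c_{vt}$. Applying Theorem~\ref{arbpack} with $K=z^{*v}_v$ (using \eqref{rvcon} for the connectivity lower bound $z^{*rv}_u$) yields $r$-rooted trees $T$ and weights $\gm^{rv}_T\ge 0$ with $\sum_T\gm^{rv}_T=z^{*v}_v$, $\sum_T\gm^{rv}_Tc(T)\le\sum_ac_ax^{*rv}_a$, and $\sum_{T\ni u}\gm^{rv}_T\ge z^{*rv}_u$, where every tree of positive weight contains $v$ and lies inside $S_v$. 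Doubling the edges of $T$ not lying on its $r$-$v$ path and shortcutting gives an $r$-$v$ path $P^{rv}_T$; then Lemma~\ref{avg2max} with regret bound $R^v$ (convention $0/0=0$) breaks $P^{rv}_T$ into at most $1+c^\reg(P^{rv}_T)/R^v$ rooted paths of regret at most $R^v$, and appending the edge $ut$ to a piece ending at $u\in S_v$ (shortcutting past $t$ if it already appears, harmless since $\rho(t)=0$) yields an $r$-$t$ path of cost at most $\dist_u+R^v+c_{ut}\le B$. The $v$-$t$ flows are treated symmetrically with root $t$: $x^{*vt}$ is a $v$-preflow whose $v\leadsto t$ connectivity is at least $z^{*v}_v$, so every tree in its Theorem~\ref{arbpack}-decomposition contains $t$ and, as above (now via \eqref{vtcon}), stays inside $S_v$; doubling and shortcutting turns each such tree into a $v$-$t$ path, which read from $t$ is a $t$-rooted path, and Lemma~\ref{avg2max} with bound $R^v$ splits it into $t$-rooted pieces of regret at most $R^v$, to each of which we prepend an edge $ru$ (for the endpoint $u\in S_v$) to get an $r$-$t$ path of cost at most $c_{ru}+R^v+c_{ut}\le B$.

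The quantitative heart is the regret bound on each side. Any $r$-$v$ flow of value $z^{*v}_v$ costs at least $\dist_v z^{*v}_v$ and any $v$-$t$ flow of that value costs at least $c_{vt}z^{*v}_v$, so constraint~\eqref{vbound} forces $\sum_ac_ax^{*rv}_a\le(B-c_{vt})z^{*v}_v$ and $\sum_ac_ax^{*vt}_a\le(B-\dist_v)z^{*v}_v$; in particular $R^v\ge 0$. Hence, as in \eqref{orineq1}, $\sum_T\gm^{rv}_Tc^\reg(P^{rv}_T)\le 2(\sum_ac_ax^{*rv}_a-\dist_v z^{*v}_v)\le 2R^v z^{*v}_v$, and analogously the total $t$-regret of the $v$-$t$ paths is at most $2(\sum_ac_ax^{*vt}_a-c_{vt}z^{*v}_v)\le 2R^v z^{*v}_v$. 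Exactly as in \eqref{orineq3}, each of the two runs for this $v$ therefore produces pieces of total weight at most $3z^{*v}_v$, while, as in \eqref{orineq2}, collecting weighted reward at least $\sum_u\rho(u)z^{*rv}_u$ and at least $\sum_u\rho(u)z^{*vt}_u$ respectively (a piece ending at $u$ has reward unchanged by appending $t$ or prepending $r$, since $\rho(r)=\rho(t)=0$). Summing over all $v$ with $z^{*v}_v>0$ and merging the two families gives a weighted collection of feasible $r$-$t$ paths of total weight at most $6\sum_vz^{*v}_v=6$ and total reward at least $\sum_{u,v}\rho(u)(z^{*rv}_u+z^{*vt}_u)=\OPT$; the maximum-reward path in it is a feasible \ptp-orienteering solution of reward at least $\OPT/6$.

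I expect the main obstacle to be the two-sided cost bookkeeping: one must check not merely that $R^v\ge 0$ but that, after decomposing and splitting, every piece on the $r$-$v$ side (resp.\ $v$-$t$ side) really has regret at most $R^v$, which is exactly where \eqref{vbound} is used to apportion the budget $B$ between the $r$-$v$ and $v$-$t$ portions --- the LP-level analogue of ``guess the node $v$ maximizing $\dist_v+c_{vt}$ and cut $P^*$ there'' from the proof of Theorem~\ref{p2predn}. The remaining ingredients (that $x^{*rv}$ and $x^{*vt}$ are genuine preflows so Theorem~\ref{arbpack} applies, that the decompositions reach $v$ and $t$ and stay within $S_v$, and that appending a single edge preserves feasibility) are routine adaptations of the argument for Theorem~\ref{thm:rt_gap}.
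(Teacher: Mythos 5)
Your proof is correct and follows essentially the same approach as the paper: for each $v$ you round both flow families $x^{*rv}$ and $x^{*vt}$ via Theorem~\ref{arbpack} and Lemma~\ref{avg2max} into regret-$R_v$ pieces, append/prepend the edge to $t$ or from $r$ to make feasible $r$-$t$ paths, and return the best, losing $6=2\cdot 3$. The only cosmetic differences are that you decompose $x^{*vt}$ directly as a $v$-preflow rather than first reversing it into a $t$-$v$ flow as the paper does, and you average globally over the merged weighted collection rather than first extracting one best path $Q^{rv},Q^{vt}$ per $v$ and then averaging over $v$ with weights $z^{*v}_v$ --- both yield the identical bound.
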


\begin{proofnobox}
Consider some $v \in V'$. Let $R_v=B-\dist_v-c_{vt}$ be the regret bound that we require
for the $r$-$v$ and $v$-$t$ paths with respect to roots $r$ and $t$ respectively.
Let $S_v=\{u\in V':\dist_u+c_{ut}\leq\dist_v+c_{vt}\}$ be the set of nodes that these
paths may visit.
It is easy to see via flow decomposition that $c^Tx^{rv}\geq\dist_vz^v_v$ and
$c^Tx^{vt}\geq c_{vt}z^v_v$. Therefore, \eqref{vbound} implies that
$c^Tx^{rv}\leq(\dist_v+R_v)z^v_v$ and $c^Tx^{vt}\leq(c_{vt}+R_v)z^v_v$.

We can apply the rounding procedure in the proof of Theorem~\ref{thm:rt_gap} to the flow
$x^{rv}$ to obtain an $r$-rooted path ending at some node $w\in S_v$ having regret at most
$R_v$, and gathering reward at least $\bigl(\sum_{u\in V'}\rho(u)z^{rv}_u\bigr)/3z^v_v$
(see \eqref{orineq2}, \eqref{orineq3}). Appending the edge $wt$ to this path, we obtain an
$r$-$t$ path, which we denote $Q^{rv}$, of $c$-cost at most $R_v+\dist_w+c_{wt}\leq B$.

We can apply the same process to the $x^{vt}$ flow. 
In particular, let $\ox^{tv}_{uw} = {x}^{vt}_{wu}$. Then, $\ox^{tv}$ is a $t$-$v$ flow of
value ${z}^{v}_v$ with $c$-cost at most $(c_{vt}+R_v){z}^{v}_v$ (as $c$ is symmetric) and 
gathering reward $\sum_{u\in V}\rho(u)z^{vt}_u$. 
Therefore, we find a $t$-rooted path ending at some node $w\in S_v$ having regret at most
$R_v$ with respect to $t$ and gathering reward at least 
$\bigl(\sum_{u\in V'}\rho(u)z^{vt}_u\bigr)/3z^v_v$. Viewing this path as a $w$-$t$ path and
appending the edge $rw$, we obtain an $r$-$t$ path, which we denote $Q^{vt}$, of $c$-cost
at most $R_v+c_{wt}+\dist_w\leq B$. 

We return the maximum-reward path among the collection $\{Q^{rv}, Q^{vt}: v \in V'\}$.
The reward of this path is at least
\begin{equation}
\frac{\sum_{v\in V'}z^v_v\bigl(\rho(Q^{rv})+\rho(Q^{vt})\bigr)}{2\sum_{v\in V'}z^v_v}
\geq\frac{\sum_{v,u\in V'}\rho(u)\bigl(z^{rv}_u+z^{vt}_u\bigr)}{6}=\OPT/6. \qedhere
\end{equation}
\end{proofnobox}

\section{Compact LPs and improved guarantees for \rvrp} 
\label{sec:rvrp}
Recall that in the {\em regret-bounded vehicle routing problem} (\rvrp), we are given an
undirected complete graph $G=(\{r\}\cup V,E)$ on $n$ nodes with a distinguished root
(depot) node $r$, metric edge costs or distances $\{c_{uv}\}$, and a regret-bound $R$.  
The goal is to find the minimum number of rooted paths that cover all nodes so that the
regret of each node with respect to the path covering it is at most $R$. Throughout, let
$\iopt$ denote the optimal value of the \rvrp instance.
We describe two compact LP-relaxations for \rvrp and corresponding
rounding algorithms that yield improvements, in both approximation ratio and running time,    
over the \rvrp-algorithm in~\cite{FriggstadS14}.   
In Section~\ref{orientreglp}, we observe that the compact LP for orienteering
\eqref{lp:root_or} yields a natural LP for \rvrp; by combining the rounding ideas 
used for orienteering and Theorem~\ref{fs-rvrpthm},
we obtain a $27$-approximation algorithm for \rvrp.
In Section~\ref{newreglp}, we formulate an unorthodox, stronger LP-relaxation \eqref{rlp2}
for \rvrp by leveraging some key structural insights in~\cite{FriggstadS14}. We devise a
rounding algorithm for this LP that leads to a 15-approximation algorithm for \rvrp, which
is a significant improvement over the guarantee obtained in~\cite{FriggstadS14}.

\subsection{Extending the orienteering LP to \rvrp} \label{orientreglp}
The LP-relaxation below can be viewed as a natural variant of the orienteering LP
adapted to \rvrp. As before, let $D=(\{r\}\cup V,A)$ be the bidirected version of
$G$. For each node $v$, 
$x^v$ is a preflow (constraint \eqref{preflow}) of value $z^v_v$ such that the
$r\leadsto u$ connectivity under capacities $\{x^v_a\}$ is 
{at least $z^v_u$ for all $u, v$ (constraint \eqref{ruvcon}).}
\begin{alignat}{3}
\min & \quad & \sum_v z^v_v & \tag{R1} \label{rlp1} \\
\text{s.t.} && \quad x^v(\dt^\into(u)\bigr) & \geq x^v\bigl(\dt^\out(u)\bigr) \qquad && 
\forall u,v\in V \label{preflow} \\
&& x^v\bigl(\delta^{\into}(S)\bigr) & \geq z^v_u \qquad && \forall v\in V, 
S \subseteq V, u\in S \label{ruvcon} \\
&& \sum_{a\in A} c_ax^v_a & \leq (\dist_v+R)z^v_v \qquad && \forall v\in V \notag \\
&& x^v\bigl(\dt^\out(r)\bigr) & = z^v_v \quad \forall v\in V, \qquad 
&& \sum_{v\in V} z^v_u \geq 1 \quad \forall u\in V, \qquad
x,z \geq 0. \notag
\end{alignat}
As before, we can obtain a compact formulation by replacing the cut constraints
\eqref{ruvcon} with constraints involving suitable flow variables.
Let $(x^*,z^*)$ be an optimal
solution to \eqref{rlp1}, and $\OPT$ be its objective value. Note that
$\ceil{\OPT}\leq\iopt$. 

\begin{theorem} \label{rlp1thm}
We can round $(x^*,z^*)$ to obtain a $27$-approximation for \rvrp.
\end{theorem}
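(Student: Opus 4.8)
The plan is to mimic the rounding of Theorem~\ref{thm:rt_gap}, but applied simultaneously to all the preflows $x^{*v}$, and then invoke Theorem~\ref{fs-rvrpthm} to assemble the resulting fractional path collection into a feasible \rvrp solution. First I would apply Theorem~\ref{arbpack} to each $x^{*v}$ with $z^{*v}_v>0$, taking $K=z^{*v}_v$, obtaining $r$-rooted trees $T$ with weights $\gm^v_T$ satisfying $\sum_T\gm^v_T=z^{*v}_v$, $\sum_T\gm^v_Tc(T)\le(\dist_v+R)z^{*v}_v$, and $\sum_{T:u\in T}\gm^v_T\ge z^{*v}_u$ for all $u$. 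As in the orienteering proof, every tree $T$ with $\gm^v_T>0$ contains $v$, so doubling the edges off the $r$-$v$ path of $T$ and shortcutting yields a simple $r$-$v$ path $P^v_T$ with $c^\reg(P^v_T)\le 2(c(T)-\dist_v)$; summing, $\sum_{v,T}\gm^v_Tc^\reg(P^v_T)\le 2\sum_v z^{*v}_v R = 2R\cdot\OPT$. The collection $\{P^v_T\}$ with weights $\{\gm^v_T\}$ then covers every $u\in V$ to extent $\sum_{v}\sum_{T:u\in T}\gm^v_T\ge\sum_v z^{*v}_u\ge 1$ (using the covering constraint $\sum_v z^v_u\ge 1$), has total weight $k=\sum_{v,T}\gm^v_T=\sum_v z^{*v}_v=\OPT$, and total regret $\al R$ with $\al\le 2\OPT$.

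Next I would feed this weighted path collection into Theorem~\ref{fs-rvrpthm} with the given $R$. That theorem says that for any $\tht\in(0,1)$ we can round to at most $\bigl(\tfrac{6}{1-\tht}+\tfrac1\tht\bigr)\al+\ceil{k/\tht}$ rooted paths of regret at most $R$ covering all of $V$. Plugging in $\al\le 2\OPT$ and $k=\OPT$, and using $\OPT\le\iopt$ (more precisely $\ceil{\OPT}\le\iopt$ so $\ceil{\OPT/\tht}\le\ceil{\iopt/\tht}$ type reasoning, handled by standard ceiling manipulations), the bound becomes at most $\bigl(\tfrac{12}{1-\tht}+\tfrac2\tht\bigr)\OPT+\ceil{\OPT/\tht}\le\bigl(\tfrac{12}{1-\tht}+\tfrac3\tht+o(1)\bigr)\iopt$. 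Minimizing $\tfrac{12}{1-\tht}+\tfrac3\tht$ over $\tht\in(0,1)$: the derivative gives $\tfrac{12}{(1-\tht)^2}=\tfrac{3}{\tht^2}$, i.e. $2\tht=1-\tht$, so $\tht=1/3$, yielding $12\cdot\tfrac32+3\cdot 3 = 18+9=27$. So choosing $\tht=1/3$ gives the $27$-approximation. I would be careful to track the ceiling/additive terms so the final count is genuinely $\le 27\iopt$ (or $27\iopt+O(1)$, which suffices since one can also argue integrality-gap-style).

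The one genuine subtlety, and the step I expect to be the main obstacle, is verifying that we are entitled to use the covering constraint at the \emph{fractional} arborescence level: Theorem~\ref{arbpack} only guarantees $\sum_{T:u\in T}\gm^v_T\ge\min\{z^{*v}_v,\ld^v_u\}$ where $\ld^v_u$ is the $r\leadsto u$ connectivity under $x^{*v}$, and constraint~\eqref{ruvcon} gives $\ld^v_u\ge z^{*v}_u$, while~(b)-type reasoning gives $z^{*v}_u\le z^{*v}_v$; hence $\min\{z^{*v}_v,\ld^v_u\}\ge z^{*v}_u$, so $\sum_{v,T:u\in T}\gm^v_T\ge\sum_v z^{*v}_u\ge 1$, as needed — this just requires spelling out. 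A second minor point is the cost bookkeeping when some $z^{*v}_v=0$ (those $v$ contribute nothing and are simply skipped) and the convention for degenerate paths of zero regret, exactly as in Theorem~\ref{thm:rt_gap}. Everything else is a direct combination of Theorems~\ref{arbpack} and~\ref{fs-rvrpthm} with the arithmetic optimization of $\tht$, so the proof should be short.
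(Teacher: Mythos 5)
Your proposal is correct and follows essentially the same route as the paper: decompose each preflow $x^{*v}$ via Theorem~\ref{arbpack}, convert the resulting trees to $r$-$v$ paths by doubling the off-path edges and shortcutting, observe that the weighted collection covers $V$ fractionally with total weight $\OPT$ and total regret at most $2R\cdot\OPT$, and feed this into Theorem~\ref{fs-rvrpthm} with $\theta=1/3$. The only place where the paper is a touch sharper is the ceiling bookkeeping: instead of $\ceil{\OPT/\theta}\le\OPT/\theta+1$ (which leaves an additive $+1$), it uses $\ceil{3\OPT}\le 3\ceil{\OPT}\le 3\iopt$ to get $24\OPT+\ceil{3\OPT}\le 27\ceil{\OPT}\le 27\iopt$ exactly, with no slack term.
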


\begin{proof} 
Apply Theorem~\ref{arbpack} to each preflow $x^{*v}$ taking $K=z_v^v$, to decompose
$x^{*v}$ into $r$-rooted out-arborescences, which we view as rooted trees. This yields
nonnegative weights $\{\gm^v_T\}_{T\in\T}$ such that $\sum_T\gm^v_T=z^v_v$, 
$\sum_T\gm^v_Tc(T)=\sum_a c_ax^v_a\leq(\dist_v+R)z^v_v$, and
$\sum_{T:u\in T}\gm^v_T\geq z^u_v$ for all $u\in V$. 
Note that $v\in T$ whenever $\gm^v_T>0$.  
Doubling the edges not lying on the $r$-$v$ paths of these trees and shortcutting, gives a 
collection $\Pc_v$ of simple $r$-$v$ paths having total regret at most $2R\cdot z_v^v$. 
Thus, $\bigcup_{v\in V}\Pc_v$ is a collection of rooted paths covering
each $u \in V$ to an extent of 1 and having total regret cost at most $2R\cdot\OPT$.
Applying Theorem~\ref{fs-rvrpthm} with $\dt=\frac{1}{3}$ to this collection yields
an \rvrp solution with at most $24\cdot\OPT+\ceil{3\cdot\OPT}\leq
24\cdot\OPT+3\ceil{\OPT}\leq 27\cdot\iopt$ paths. 
\end{proof}

\subsection{A new compact LP for \rvrp leading to a 15-approximation} \label{newreglp}
We now propose a different 
LP for \rvrp, which leads to a much-improved
15-approximation for \rvrp. To motivate this LP, we first collect some facts
from~\cite{FriggstadS14,BlumCKLMM07} pertaining to the regret objective.   
By merging all nodes at distance 0 from each other, we may assume that $c_{uv}>0$ for all
$u,v\in V\cup\{r\}$, and hence $\dist_v>0$ for all $v\in V$.

\begin{defn}[\cite{FriggstadS14}] \label{redblue}
Let $P$ be a rooted path ending at $w$. Consider an edge $(u,v)$ of $P$, where $u$ 
precedes $v$ on $P$. We call this a {\em red} edge of $P$ if there exist nodes $x$ and $y$  
on the $r$-$u$ portion and $v$-$w$ portion of $P$ respectively such that
$\dist_x\geq\dist_y$; otherwise, we call this a {\em blue} edge of $P$. 
For a node $x\in P$, let $\red(x,P)$ denote the maximal subpath $Q$ of $P$ containing $x$ 
consisting of only red edges (which might be the trivial path $\{x\}$).
\end{defn}

Note that the first edge of a rooted path $P$ is always a blue edge.
Call the collection 

\begin{lemma}[\cite{BlumCKLMM07}] \label{lem:redcost} 
For any rooted path $P$, we have
$\sum_{\text{$e$ red on $P$}} c_e \leq \frac{3}{2} c^{\reg}(P)$.
\end{lemma}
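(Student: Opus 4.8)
I want to bound the total $c$-cost of red edges on a rooted path $P$ (ending at $w$) by $\tfrac32 c^{\reg}(P)$. The natural approach is to charge the cost of each red edge against the regret accrued along $P$, using the defining property of red edges: an edge $(u,v)$ is red iff some node $x$ on the $r$--$u$ portion has $\dist_x \ge \dist_y$ for some node $y$ on the $v$--$w$ portion. First I would try to understand the \emph{structure} of the red edges. A useful reformulation: for a node $a\in P$, write $\phi(a)$ for the maximum of $\dist_{a'}$ over all $a'$ preceding-or-equal-to $a$ on $P$ (the running prefix-max of $\dist$ along $P$). Then I claim an edge $(u,v)$ of $P$ is red precisely when there is a node $y$ strictly after $v$ with $\dist_y \le \phi(u)$; equivalently, when $\phi$ is "not yet permanently past the value $\phi(u)$". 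This should let me describe the red edges as a union of maximal red subpaths, each associated with a "dip" in the $\dist$-values along $P$.

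\textbf{Key steps.} (1) Reduce to a single maximal red subpath: I would show that distinct maximal red subpaths $Q_1,Q_2,\dots$ of $P$ are vertex-disjoint (immediate from the definition of $\red(x,P)$) and, more importantly, that the portions of $P$ "witnessing" redness for different $Q_j$'s can be taken disjoint, so that it suffices to prove, for each maximal red subpath $Q$ from node $a$ to node $b$, that $c(Q) \le \tfrac32\big(c^{\reg}\text{-length of a suitable witnessing portion of }P\big)$, and then sum. (2) For a single maximal red subpath $Q = a \rightsquigarrow b$: let $m = \phi(a)$ be the prefix-max of $\dist$ just before $Q$, and let $z$ be the first node after $b$ with $\dist_z \ge m$ (this exists and $\dist$ only exceeds $m$ from $z$ onward, by maximality of $Q$). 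Then every node on the portion $a \rightsquigarrow z$ has $\dist \le m$ roughly speaking — more precisely, $\dist \le \phi(b) \le \dist_z$, and on $[a,b]$ we have $\dist \le m$. The point is that the $\dist$-values along $a\rightsquigarrow z$ stay bounded, so the regret of $P$ restricted to $[a,z]$, namely $c(P[a,z]) - (\dist_z - \dist_a)$, is large relative to $c(Q)$. (3) The quantitative heart: on the subpath $[a,z]$, we have $c(P[a,z]) \ge c(Q) = c(P[a,b])$, and $\dist_z - \dist_a \le m - \dist_a \le$ (something like $c(P[a,b])$ via triangle inequality since $\dist_a, m \le$ values reachable within $Q$)... this is where I'd extract a factor like $\tfrac12$. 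Concretely, I expect to show $c^{\reg}(P[a,z]) \ge c(P[a,z]) - (\dist_z - \dist_a) \ge \tfrac23 c(Q)$ by carefully bounding $\dist_z - \dist_a$ against $c(Q)$ using that $z$ is the \emph{first} node past $m$ (so $\dist$ just before $z$ is $<m$, and one edge of cost $c_{z',z}$ bridges the gap), combined with the prefix-max node of value $m$ lying before $a$.

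\textbf{Main obstacle.} The delicate part is step (3): getting the constant exactly $\tfrac32$ rather than something worse, and making the witnessing-portions-disjoint argument in step (1) airtight. The tension is that $c^{\reg}(P[a,z])$ must simultaneously pay for $c(Q)$ \emph{and} the $\dist$ fluctuations, so I need the witnessing portions for different maximal red subpaths to be genuinely disjoint to avoid double-counting. I suspect the clean way is to show the maximal red subpaths, \emph{ordered along $P$}, have the property that the "recovery node" $z_j$ for $Q_j$ lies weakly before the start $a_{j+1}$ of $Q_{j+1}$ — or if not, that the overlapping $Q$'s can be merged. Once that telescoping is set up, summing $c^{\reg}(P[a_j,z_j]) \ge \tfrac23 c(Q_j)$ over disjoint intervals gives $c^{\reg}(P) \ge \tfrac23 \sum_j c(Q_j) = \tfrac23 \sum_{e\text{ red}} c_e$, which is the claim. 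Alternatively — and this may be cleaner — I would look for a slick \emph{amortized/potential} argument: walk along $P$ tracking $\phi$, and whenever we traverse a red edge $(u,v)$, charge $c_{uv}$ to the future regret that $P$ is "forced" to incur before $\phi$ surpasses $\phi(u)$; a two-to-three ratio should fall out of triangle inequality bounding $c_{uv}$ by $(\phi(u) - \dist_u) + (\phi(u) - \dist_v) + \text{regret}$-type terms. I would pursue whichever of these two routes (interval-telescoping vs. potential-function) yields the factor $\tfrac32$ with least casework.
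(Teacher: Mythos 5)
The paper cites this lemma from~\cite{BlumCKLMM07} and does not give a proof, so there is no argument in the paper to compare yours against; I assess the proposal on its own.

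Your high-level structure is essentially right, and one clarification makes your ``main obstacle'' disappear. Write $\phi(p)$ for the prefix max $\max_{p'\preceq p}\dist_{p'}$ along $P$. Since each maximal red subpath $Q_j=a_j\leadsto b_j$ is bracketed by blue edges, the blue edge entering $a_j$ forces $\phi(a_j)=\dist_{a_j}$, and the blue edge $(b_j,b_j')$ leaving $b_j$ forces $\dist_{b_j'}>\phi(b_j)\geq\dist_{a_j}$. So your recovery node $z_j$ is always just the immediate successor $b_j'$; the witness intervals $P[a_j,b_j']$ are therefore automatically edge-disjoint, and no merging or telescoping subtlety arises. (If $Q_j$ ends at $w$, take $Q_j$ itself as the witness.)

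The genuine gap is the local inequality, which is where all the content of the lemma lives: you need $c^{\reg}(P[a_j,b_j'])\geq\tfrac23\,c(Q_j)$. Since $c_{b_jb_j'}\geq\dist_{b_j'}-\dist_{b_j}$, this reduces (in the nontrivial case $\dist_{b_j}>\dist_{a_j}$) to proving $c(Q_j)\geq 3(\dist_{b_j}-\dist_{a_j})$, and this does \emph{not} ``fall out of the triangle inequality'': knowing only that $Q_j$ contains one node of distance $\leq\dist_{a_j}$ and one of distance $\geq\dist_{b_j}$ gives $c(Q_j)\geq\dist_{b_j}-\dist_{a_j}$, a factor of $3$ short. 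One correct route: let $d_0,\dots,d_m$ be the $\dist$-values of the nodes of $Q_j$ in order. Redness of every edge of $Q_j$, together with $\phi(a_j)=\dist_{a_j}$ and the fact that every node after $b_j$ has distance exceeding $\phi(b_j)$, is exactly the condition $\max_{\ell\leq i}d_\ell\geq\min_{\ell>i}d_\ell$ for every $0\leq i<m$; this forces the sequence $(d_i)$ to have at least one downward crossing of every level in $(d_0,d_m)$, so its total descent is at least $d_m-d_0$, hence $c(Q_j)\geq\sum_i|d_{i+1}-d_i|=2(\text{total descent})+(d_m-d_0)\geq 3(d_m-d_0)$. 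Your sketch only gestures at this step (``a two-to-three ratio should fall out''); until this or an equivalent quantitative argument is supplied, the proposal is a plausible plan but not a proof.
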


\begin{lemma}[\cite{FriggstadS14}] \label{lem:cross} 
(i) Let $u, v$ be nodes on a rooted path $P$ such that $u$ precedes $v$ on $P$ and
$\red(u,P) \neq \red(v,P)$; then $\dist_{u} < \dist_{v}$.
(ii) If $P'$ is obtained by shortcutting $P$ so that it contains at most one node from
each red interval of $P$, then for every edge $(x,y)$ of $P'$ with $x$ preceding $y$ on
$P'$, we have $\dist_x < \dist_y$.
\end{lemma}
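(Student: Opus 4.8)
\textbf{Proof plan for Lemma~\ref{lem:cross}.}

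The plan is to prove both parts by carefully exploiting the definition of red and blue edges. For part (i), I would argue by contraposition: suppose $u$ precedes $v$ on $P$ and $\dist_u \geq \dist_v$; I will show $\red(u,P) = \red(v,P)$, i.e., that every edge on the $u$-$v$ subpath of $P$ is red. Take any edge $(x,y)$ on this subpath, with $x$ preceding $y$. Then $x$ lies on the $r$-$x$ portion with $u$ somewhere before or equal to $x$? — here I need to be careful about the ordering. Since $u \preceq x$ and $y \preceq v$ along $P$, the node $u$ lies on the $r$-$x$ portion of $P$ (weakly) and $v$ lies on the $y$-$w$ portion of $P$; and we are assuming $\dist_u \geq \dist_v$. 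By Definition~\ref{redblue}, this exhibits the required witnesses (take $x := u$, $y := v$ in the definition's notation — modulo the clash of variable names) showing $(x,y)$ is a red edge of $P$. Hence the whole $u$-$v$ subpath is red, so it is contained in a single maximal red subpath, giving $\red(u,P) = \red(v,P)$. Contrapositively, $\red(u,P) \neq \red(v,P)$ forces $\dist_u < \dist_v$.

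For part (ii), let $P'$ be the shortcutting of $P$ keeping at most one node per red interval, and let $(x,y)$ be an edge of $P'$ with $x$ preceding $y$. Let $x', y'$ be the positions of $x, y$ in the original path $P$, so $x'$ precedes $y'$ on $P$ and there is no node of $P'$ strictly between them. Because $P'$ retains at most one node from each red interval, and $(x,y)$ is a genuine edge of $P'$ (so $x \neq y$ as vertices of $P'$), the nodes $x$ and $y$ cannot lie in the same red interval of $P$ — otherwise $P'$ would contain two nodes of that interval. Thus $\red(x, P) \neq \red(y, P)$, and since $x$ precedes $y$ on $P$, part (i) immediately yields $\dist_x < \dist_y$.

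I expect part (i) to be the main (though still modest) obstacle, essentially a bookkeeping exercise to match the witnesses in Definition~\ref{redblue} correctly and to handle the boundary cases where $u = x$ or $v = y$ along the chosen edge; the unfortunate reuse of the symbols $x, y$ both as the definition's witness nodes and as the edge endpoints in the statement means I would rename the edge endpoints (say to $a, b$) in the write-up to avoid confusion. Part (ii) is then a short deduction from part (i) plus the observation that consecutive nodes of $P'$ belong to distinct red intervals. One subtlety worth a sentence: a "red interval" $\red(x,P)$ can be a trivial single-vertex path, and two distinct trivial intervals $\{x\}$, $\{y\}$ are still distinct, so the argument in (ii) goes through without special-casing trivial intervals.
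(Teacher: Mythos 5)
Your argument is correct. Note that the paper itself cites Lemma~\ref{lem:cross} from~\cite{FriggstadS14} without reproducing a proof, so there is no in-paper argument to compare against; but your proof is a clean and complete one. The contrapositive reading of part~(i) is exactly the right move: assuming $u \prec v$ and $\dist_u \geq \dist_v$, every edge $(a,b)$ on the $u$-$v$ subpath has $u$ (weakly) in its $r$-$a$ portion and $v$ (weakly) in its $b$-$w$ portion, so $(u,v)$ serve as witnesses in Definition~\ref{redblue} making $(a,b)$ red; the $u$-$v$ subpath is then a red run containing both $u$ and $v$, so maximality forces $\red(u,P)=\red(v,P)$. Part~(ii) follows immediately once you note that shortcutting preserves the order along $P$ and that consecutive nodes of $P'$ must lie in distinct red intervals. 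Your remark about the variable-name clash with Definition~\ref{redblue} and about trivial (single-vertex) red intervals shows you have correctly identified the only places where one could trip up; neither causes a real problem, as you observe.
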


We say that a node $u$ on a rooted path of $P$ is a {\em sentinel} of $P$ if $u$ is the
first node of $\red(u,P)$. Part (ii) above shows that if we shortcut each path $P$ of an
optimal \rvrp-solution past the non-sentinel nodes of $P$, then we obtain a
distance-increasing collection of paths. Moreover, part (i) implies that if $x$ and $y$
are sentinels on $P$ with $x$ appearing before $y$, then 
$\max_{u\in\red(x,P)}\dist_u<\min_{u\in\red(y,P)}\dist_u$.
Finally, every non-sentinel node is connected to the sentinel corresponding to its red
interval via red edges, and Lemma~\ref{lem:redcost} shows that the total ($c$-) cost
of these edges at most $1.5R(\text{optimal value})$. 

Thus, we can view an \rvrp-solution as a collection of distance-increasing rooted paths
covering some sentinel nodes $S$, and a low-cost way of connecting the nodes in $V\sm S$ 
to $S$. Our LP-relaxation searches for the best such solution. 
Let $\D:=\{\dist_v:v\in V\}$. For every $u\in V$, define $\D_u$ to be the collection
$\bigl\{[d_1,d_2]: d_1,d_2\in\D,\ d_1\leq\dist_u\leq d_2\bigr\}$ of (closed) intervals. 
We have variables $x_{u,I,u}$ for every node $u\in V$ and interval $I=[d_1,d_2]\in\D_u$ 
to indicate if $u$ is a sentinel and $d_1, d_2$ are the minimum and maximum
distances (from $r$) respectively of nodes in the red interval corresponding to $u$; we 
say that $I$ is $u$'s distance interval. 
We also have variables $x_{u,I,v}$ for $v\neq u$ to indicate that $v$ is connected to
sentinel $u$ with distance interval $I$, and edge variables $\{z_e\}_{e\in E}$ that
encode these connections. 
Finally, we have flow variables $f_{r,u,I}, f_{u,I,v,J}, f_{u,I,t}$ for all $u,v\in V$ and 
$I\in\D_u$, $J\in\D_v$ that encode the distance-increasing rooted paths on the
sentinels, with $t$ representing a fictitious sink. We include constraints that encode
that the distance intervals of sentinels lying on the same path are disjoint, and a 
non-sentinel $v$ can be connected to $(u,I)$ only if $\dist_v\in I$.
{We obtain the following LP.}

\begin{alignat}{3}
\min & \quad & \sum_{u\in V,I\in\D_u} f_{r,u,I} \qquad & \tag{R2} \label{rlp2} \\
\text{s.t.} && \sum_{u\in V,I\in\D_u}x_{u,I,v} & \geq 1 \qquad &&
\forall v\in V \label{vcov} \\
&& x_{u,I,v}\leq x_{u,I,u}, \quad x_{u,I,v} & = 0 \quad \text{if $\dist_v\notin I$} \qquad 
&& \forall u,v\in V, I\in\D_u \label{vasgn} \\
&& z\bigl(\delta(S)\bigr) & \geq \sum_{u\notin S,I\in\D_u}x_{u,I,v}  
\qquad && \forall v\in V, \{v\}\sse S\sse V \label{vcon} \\
&& f_{r,u,I}+\sum_{v\in V,J\in\D_v}f_{v,J,u,I} & = x_{u,I,u} 
\qquad && \forall u\in V, I\in\D_u \label{usentin} \\
&& \sum_{v\in V,J\in\D_v}f_{u,I,v,J}+f_{u,I,t} & = x_{u,I,u} 
\qquad && \forall u\in V, I\in\D_u \label{usentout} \\
&& f_{u,I,v,J} & = 0 \qquad && 
\begin{aligned} & \forall u,v\in V, I\in\D_u, J\in\D_v: \\
& I\cap J\neq\es\text{ or }\dist_v\leq\dist_u \end{aligned} \label{dinc} \\
&& \sum_{u,v\in V,I\in\D_u,J\in\D_v}c^\reg_{uv}f_{u,I,v,J} & \leq 
R\cdot\sum_{u\in V,I\in\D_u}f_{r,u,I} \label{costf} \\
&& \sum_{e\in E}c_ez_e & \leq 1.5R\cdot\sum_{u\in V,I\in\D_u}f_{r,u,I} \label{costz} \\
&& x,z,f & \geq 0. \notag
\end{alignat} 
Constraint \eqref{vcov} encodes that every node $v$ is either a sentinel or is
connected to a sentinel; \eqref{vasgn} ensures that if $v$ is assigned to $(u,I)$, then
$u$ is indeed a sentinel with distance interval $I$ and that $\dist_v\in I$. Constraints
\eqref{vcon} ensure that the $z_e$s (fractionally) connect each non-sentinel $v$ to the
sentinel specified by the $x_{u,I,v}$ variables.
Constraints \eqref{usentin}, \eqref{usentout} encode that each sentinel $(u,I)$ lies on
rooted paths, and 
\eqref{dinc} ensures that these paths are distance increasing and moreover the distance
intervals of the sentinels on the paths are disjoint. Finally, letting $k$ denote the
number of paths used, \eqref{costf}, \eqref{costz} encode that the total regret of the
distance-increasing paths is at most $kR$ (note that $c^\reg_{ru}=0$ for all $u$), and the
total cost of the edges used to connect non-sentinels to sentinels is at most $1.5kR$. 
As before, the cut constraints \eqref{vcon} can be equivalently stated using flows to
obtain a polynomial-size LP.
Let $(x^*,z^*,f^*)$ be an optimal solution to \eqref{rlp2} and
$\OPT$ denote its objective value. We have already argued that
an optimal \rvrp-solution yields an integer solution to \eqref{rlp2}, so we obtain the
following. 

\begin{lemma} \label{rlp2relax}
$\ceil{\OPT}$ is at most the optimal value, $\iopt$, of the \rvrp instance.
\end{lemma}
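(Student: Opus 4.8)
The plan is to show that an optimal \rvrp-solution can be converted into a feasible solution to \eqref{rlp2} of the same objective value (namely, the number of paths in the \rvrp-solution), which would give $\OPT\le\iopt$, and since $\OPT$ is a minimum over a valid LP and $\iopt$ is an integer, $\ceil{\OPT}\le\iopt$. Almost all of the structural work has already been done in the text preceding the lemma (the red/blue edge decomposition, Lemmas~\ref{lem:redcost} and~\ref{lem:cross}), so the proof is really just a careful bookkeeping of how to read off the LP variables from an optimal \rvrp-solution; I expect it to be short.

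First I would fix an optimal \rvrp-solution consisting of $k=\iopt$ rooted paths $P_1,\dots,P_k$, each of regret at most $R$, covering $V$; by discarding duplicates we may assume each node is covered exactly once. For each path $P_j$, let $S_j$ be its set of sentinels (first nodes of red intervals), and for a sentinel $u\in S_j$ let $I_u:=[\min_{w\in\red(u,P_j)}\dist_w,\ \max_{w\in\red(u,P_j)}\dist_w]\in\D_u$ be its distance interval. Now set, for each $j$ and each sentinel $u\in S_j$: $x_{u,I_u,u}=1$; $x_{u,I_u,v}=1$ for each non-sentinel $v$ in the red interval $\red(u,P_j)$ (and all other $x$-variables $0$); $z_e=1$ for each red edge $e$ on some $P_j$ (all other $z_e=0$); and define the flows $f_{r,u,I_u}, f_{u,I_u,v,I_v}, f_{u,I_u,t}$ to route one unit of flow from $r$ through the sentinels of $P_j$ in their order along $P_j$ and then into $t$ — i.e. $f_{r,u,I_u}=1$ if $u$ is the first sentinel on some $P_j$, $f_{u,I_u,v,I_v}=1$ if $u,v$ are consecutive sentinels on some $P_j$, $f_{u,I_u,t}=1$ if $u$ is the last sentinel on some $P_j$, and all other $f$-variables $0$.

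Next I would verify feasibility constraint by constraint. Constraint \eqref{vcov} holds since every $v$ is either a sentinel (covered by $x_{v,I_v,v}$) or lies in exactly one red interval; \eqref{vasgn} holds by construction of $I_u$ (note $\dist_v\in I_u$ for $v\in\red(u,P_j)$, and $x_{u,I_u,v}=x_{u,I_u,u}=1$); \eqref{vcon} holds because the red edges of the $P_j$'s literally connect each non-sentinel $v$ to its sentinel, so any cut $S$ separating $v$ from that sentinel is crossed by a red edge with $z_e=1$; \eqref{usentin} and \eqref{usentout} are flow-conservation identities that hold since each sentinel $u$ sits on exactly one $P_j$ with exactly one predecessor situation (either $r$ or a preceding sentinel) and one successor situation (either a following sentinel or $t$). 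The key constraint \eqref{dinc} is where Lemma~\ref{lem:cross} does the work: part~(i) gives that for consecutive sentinels $u$ before $v$ on $P_j$, $\max_{w\in\red(u,P_j)}\dist_w<\min_{w\in\red(v,P_j)}\dist_w$, so $I_u\cap I_v=\es$ and $\dist_v>\dist_u$, killing exactly the forbidden $f$-variables. For \eqref{costf}: along each $P_j$ the sum $\sum c^\reg_{uv}f_{u,I_u,v,I_v}$ over consecutive sentinels is at most the total regret of the shortcut distance-increasing subpath of $P_j$ through the sentinels, which is at most $c^\reg(P_j)\le R$ (shortcutting does not increase regret in the regret metric, and the blue edges only add more), and summing over $j$ gives the bound $R\cdot\sum_{u,I}f_{r,u,I}=Rk$. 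For \eqref{costz}: the $z$-support is exactly the red edges, and Lemma~\ref{lem:redcost} bounds the $c$-cost of red edges on each $P_j$ by $\tfrac32 c^\reg(P_j)\le\tfrac32 R$, so the total is at most $1.5Rk=1.5R\cdot\sum_{u,I}f_{r,u,I}$. Nonnegativity is immediate. Hence this is a feasible solution with objective $\sum_{u,I}f_{r,u,I}=k=\iopt$, so $\OPT\le\iopt$, and since $\iopt\in\Z$, $\ceil{\OPT}\le\iopt$.

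The main obstacle — really the only nontrivial point — is confirming that the per-path contributions to the left-hand sides of \eqref{costf} and \eqref{costz} are genuinely bounded by $c^\reg(P_j)$ and $\tfrac32 c^\reg(P_j)$ respectively, i.e., that reindexing the sentinel hops through the regret metric and invoking Lemma~\ref{lem:redcost} is legitimate; the identity $c^\reg(Z)=c(Z)$ on cycles and the telescoping behavior of $c^\reg$ along a path make this routine, but it is the one spot that needs a sentence of care rather than a one-line appeal to an earlier result. Everything else is a direct translation of the red/blue structural picture already spelled out in the paragraph before the lemma statement.
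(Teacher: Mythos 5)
Your proposal is correct and takes essentially the same route as the paper: the paper's proof is just the one-line remark ``We have already argued that an optimal \rvrp-solution yields an integer solution to \eqref{rlp2},'' which points back to exactly the sentinel / red-interval construction you spell out, together with Lemmas~\ref{lem:cross} and~\ref{lem:redcost}. Your write-up makes explicit the variable assignments and the constraint-by-constraint verification that the paper leaves implicit in the paragraph preceding the LP.
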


We remark that an integer solution to \eqref{rlp2} {\em need not} correspond to an \rvrp
solution since constraints \eqref{vcon} only ensure that non-sentinels are connected to
sentinels, but not necessarily via paths. 
Nevertheless, we show that we can round $(x^*,z^*,f^*)$ to an \rvrp-solution using at most
$15\cdot\ceil{\OPT}$ paths.  

Our rounding algorithm proceeds in a similar fashion as the \rvrp-algorithm
in~\cite{FriggstadS14}; yet, we obtain an improved approximation ratio since one can solve
\eqref{rlp2} exactly whereas one can only obtain a $(2+\e)$-approximate solution to the
configuration LP in~\cite{FriggstadS14}. Let $\tht\in(0,1)$ be a parameter that we will set
later. We first obtain a forest of $c$-cost at most $\frac{3R}{1-\tht}\cdot\OPT$ such that
every component $Z$ contains a witness node $v$ that is assigned to an extent of at least
$\tht$ to sentinels in $Z$. We argue that if we contract the components of $F$, then the
distance-increasing sentinel flow paths yield an acyclic flow that covers every contracted
component to an extent of at least $\tht$. Hence, using the integrality property of flows,
we obtain an integral flow, and hence a collection of at most $\ceil{\frac{\OPT}{\tht}}$
rooted paths, that covers every component and has cost at most
$\frac{R}{\tht}\cdot\OPT$. Next, we show that we can uncontract the components and attach
the component-nodes to these rooted paths incurring an additional cost of at most
$\frac{6R}{1-\tht}\cdot\OPT$. Finally, by applying Lemma~\ref{avg2max}, we obtain an \rvrp
solution with at most
$\bigl(\frac{6}{1-\tht}+\frac{1}{\tht}\bigr)\OPT+\ceil{\frac{\OPT}{\tht}}$ rooted paths. We
now describe the algorithm in detail and analyze it.

{\small \vspace{5pt}
\hrule \vspace{-2pt} 
\begin{enumerate}[label=A\arabic*., itemsep=0.5ex]
\item For $S\sse V$, define $h(S)=1$ if $\sum_{u\in S,I\in\D_u}x^*_{u,I,v}<\tht$ for all
$v\in S$, and $0$ otherwise. 
$h$ is a {\em downwards-monotone} cut-requirement
function: if $\es\neq A\sse B$, then $h(A)\geq h(B)$. 
Use the LP-relative 2-approximation algorithm in~\cite{GoemansW94} for $\{0,1\}$
downwards-monotone functions to obtain a forest $F$ such that 
{$|\dt(S)\cap F|\geq h(S)$ for all $S\sse V$.} 

\item For every component $Z$ of $F$ with $r\notin Z$, pick a {\em witness node} $w\in Z$
such that $\sum_{u\in Z,I\in\D_u}x^*_{u,I,w}\geq \tht$. 
Let $\sg(w)=\{(u,I): u\in Z, x^*_{u,I,w}>0\}$.
Let $W\sse V$ be the set of all such witness nodes. 

\item $f^*$ is an $r\leadsto t$ flow in an auxiliary graph having nodes
$r$, $t$, and $(u,I)$ for all $u\in V, I\in\D_u$, edges $(r,(u,I))$, $((u,I),t)$ for 
all $u\in V, I\in\D_u$, and edges $((u,I),(v,J))$ for all $u,v\in V, I\in\D_u, J\in\D_v$
such that $\dist_u<\dist_v$ and $I\cap J=\es$. 
Let $\{f^*_P\}_{P\in\Pc}$ be a path-decomposition of this flow. 
Modify each flow path $P\in\Pc$ as follows.
First, drop $t$ from $P$.
Shortcut $P$ past the nodes in $P$ that are not in $\{r\}\cup\bigcup_{w\in W}\sg(w)$. 
The resulting path maps naturally to a rooted path in $G$ (obtained by simply dropping the
distance intervals), which we denote by $\pi(P)$. Clearly, 
$c^{\reg}\bigl(\pi(P)\bigr)\leq\sum_{((u,I),(v,J))\in P}c^\reg_{uv}$ since shortcutting
does not increase the regret cost.

\item Let $\Qc$ be the collection of rooted paths obtained by taking the paths 
$\{\pi(P): P\in\Pc\}$ and contracting the components of $F$. Let $H$ be the directed graph
obtained by directing the paths in $\Qc$ away from $r$. 
To avoid notational clutter, for a component $Z$ of $F$, we use $Z$ to also denote the
corresponding contracted node in $H$. 
For each $Q\in\Qc$, define $y_Q=\sum_{P\in\Pc:\pi(P)\text{ maps to }Q}f^*_P$. 
Lemma~\ref{acycflow} proves that $H$ is acyclic and $\sum_{Q\in\Qc: Z\in Q}y_Q\geq\tht$
for every component $Z$ of $F$. 

\item Use the integrality property of flows to round the flow
$\bigl\{\frac{y_Q}{\tht}\bigr\}_{Q\in\Qc}$ to 
an integer flow of value $k\leq\ceil{\frac{\OPT}{\tht}}$ and regret-cost at most
$\frac{R}{\tht}\cdot\OPT$. Since $H$ is acyclic, this yields rooted paths
$\hP_1,\ldots,\hP_k$ so that every component $Z$ of $F$ 
{lies on exactly one $\hP_i$ path.}

\item 
We map the $\hP_i$s to rooted paths in $G$ that cover $V$ as follows. 
Consider a path $\hP_i$. Let $Z$ be a component lying on $\hP_i$, and $u,v\in Z$ 
be the nodes where $\hP_i$ enters and leaves $Z$ respectively. We add to $\hP_i$ a
$u$-$v$ path that covers all nodes of $Z$ obtained by doubling all edges of $Z$
except those on the $u$-$v$ path in $Z$ and shortcutting. Let $\tP_i$ be the rooted path
in $G$ obtained by doing this for all components lying on $\hP_i$.

\item Finally, we use Lemma~\ref{avg2max} to convert $\tP_1,\ldots,\tP_k$ to an
\rvrp-solution. 
\end{enumerate} 
\vspace{-2pt}
\hrule}

\paragraph{Analysis.} 
We first bound the cost of the forest $F$ obtained in step A1 in Lemma~\ref{forest}, which
also yields a bound on the additional cost incurred in step A6 to convert the $\hP_i$s to
the rooted paths $\tP_i$s. Lemma~\ref{acycflow} proves that $H$ is acyclic, and that $y$
covers each component of $F$ to an extent of at least $\tht$. Theorem~\ref{rlp2thm}
combines these ingredients to obtain the stated performance guarantee.

\begin{lemma} \label{forest}
The forest $F$ obtained in step A1 satisfies $c(F)\leq\frac{3R}{1-\tht}\cdot\OPT$. 
\end{lemma}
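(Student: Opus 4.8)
The plan is to use the fact that the Goemans--Williamson primal-dual algorithm for downwards-monotone $\{0,1\}$ cut-requirement functions is \emph{LP-relative}: the forest $F$ it returns has cost at most twice the optimal value of the natural LP relaxation
\[
\min\ \Bigl\{ \sum_{e\in E} c_e y_e \ :\ \sum_{e\in\dt(S)} y_e \geq h(S)\ \ \forall\, \es\neq S\sse V,\ \ y\geq 0 \Bigr\}.
\]
So it suffices to exhibit a feasible solution to this LP of cost at most $\tfrac{3R}{2(1-\tht)}\cdot\OPT$; the factor-$2$ loss of the algorithm then gives the claimed bound $c(F)\leq\tfrac{3R}{1-\tht}\cdot\OPT$.

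The natural candidate for such a fractional solution is $y := \tfrac{1}{1-\tht}\,z^*$, where $z^*$ is the edge-vector from the optimal solution to \eqref{rlp2}. First I would verify feasibility. Fix a set $S$ with $\es\neq S\sse V$ and $h(S)=1$; by definition of $h$, every $v\in S$ satisfies $\sum_{u\in S,I\in\D_u}x^*_{u,I,v}<\tht$. Pick any $v\in S$. Constraint \eqref{vcov} gives $\sum_{u\in V,I\in\D_u}x^*_{u,I,v}\geq 1$, so the ``outside'' mass $\sum_{u\notin S,I\in\D_u}x^*_{u,I,v}$ is at least $1-\tht$. Now constraint \eqref{vcon} applied to this $v$ and $S$ yields $z^*\bigl(\dt(S)\bigr)\geq\sum_{u\notin S,I\in\D_u}x^*_{u,I,v}\geq 1-\tht$, hence $y\bigl(\dt(S)\bigr)=\tfrac{1}{1-\tht}z^*\bigl(\dt(S)\bigr)\geq 1 = h(S)$. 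When $h(S)=0$ the constraint is vacuous. So $y$ is feasible.

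Next I would bound the cost of $y$: by constraint \eqref{costz}, $\sum_e c_e z^*_e\leq 1.5R\cdot\sum_{u,I}f^*_{r,u,I}=1.5R\cdot\OPT$, so $\sum_e c_e y_e = \tfrac{1}{1-\tht}\sum_e c_e z^*_e\leq\tfrac{1.5R}{1-\tht}\cdot\OPT$. Therefore the LP optimum is at most $\tfrac{1.5R}{1-\tht}\cdot\OPT$, and the $2$-approximation guarantee of~\cite{GoemansW94} gives $c(F)\leq 2\cdot\tfrac{1.5R}{1-\tht}\cdot\OPT=\tfrac{3R}{1-\tht}\cdot\OPT$, as desired.

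The only subtle point---and the step I would be most careful about---is making sure the cut-requirement function $h$ really is downwards-monotone so that the algorithm of~\cite{GoemansW94} applies, and that the guarantee there is genuinely LP-relative rather than merely comparing to an integral optimum. The excerpt already asserts downwards-monotonicity of $h$ in step A1 (if $\es\neq A\sse B$ then $h(A)\geq h(B)$, since the sum defining $h$ over the smaller set $A$ is a partial sum of that over $B$), so this is immediate; and the Goemans--Williamson analysis is indeed against the LP, so no real obstacle remains---the argument is essentially a one-line LP-comparison once the feasibility check above is in place.
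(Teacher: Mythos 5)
Your proposal is correct and follows essentially the same route as the paper's (more terse) proof: scale $z^*$ by $\tfrac{1}{1-\tht}$ to get a feasible fractional solution to the forest LP for $h$ (via constraints \eqref{vcov} and \eqref{vcon}), then invoke the LP-relative factor-2 guarantee of \cite{GoemansW94} and bound $\sum_e c_e z^*_e$ by \eqref{costz}. The only difference is that you spell out the intermediate steps more explicitly than the paper does.
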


\begin{proof} 
If $h(S)=1$, then we have 
$z^*\bigl(\dt(S)\bigr)\geq\sum_{u\notin S,I\in\D_u}x^*_{u,I,v}\geq 1-\tht$ for all $v\in S$
due to \eqref{vcov}, \eqref{vcon}. 
So the forest $F$ has cost at most $2\bigl(\sum_e c_ez^*_e\bigr)/(1-\tht)$ which is at most
$\frac{3R}{1-\tht}\cdot\OPT$ (due to \eqref{costz}).
\end{proof}

\begin{lemma} \label{helper}
Let $P\in\Pc$ and $Z$ be a component of $F$. Let $w$ be the witness node of $Z$.
Then the number of nodes in $\pi(P)\cap Z$ is equal to $|P\cap\sg(w)|\leq 1$.
\end{lemma}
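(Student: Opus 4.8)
The plan is to unwind the definitions in steps A2 and A3 and show that the contributions to $\pi(P)\cap Z$ are exactly the sentinel-interval pairs of $P$ that belong to $\sigma(w)$. Recall that $\pi(P)$ is obtained from a flow path $P$ in the auxiliary graph by first dropping $t$, then shortcutting past all nodes other than $r$ and those in $\bigcup_{w'\in W}\sigma(w')$, and finally dropping the distance intervals. A node of $G$ appears on $\pi(P)\cap Z$ precisely when it arises from some pair $(u,I)$ that survives this shortcutting and whose node $u$ lies in $Z$. Since the only pairs that survive are those in $\sigma(w')$ for witness nodes $w'$, and since $\sigma(w')=\{(u,I):u\in Z', x^*_{u,I,w'}>0\}$ is tied to the component $Z'$ containing $w'$, a surviving pair $(u,I)$ with $u\in Z$ must come from $\sigma(w)$ where $w$ is the (unique) witness node of $Z$. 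Hence the nodes of $\pi(P)$ lying in $Z$ are in bijection with $P\cap\sigma(w)$, giving the claimed equality.

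The remaining point is the bound $|P\cap\sigma(w)|\le 1$. Here I would use constraint~\eqref{dinc}: the auxiliary-graph edges of the form $((u,I),(v,J))$ only exist when $\dist_u<\dist_v$ and $I\cap J=\emptyset$, so along the flow path $P$ the distance intervals of successive pairs are pairwise disjoint. But every pair in $\sigma(w)$ has the form $(u,I)$ with $x^*_{u,I,w}>0$, and constraint~\eqref{vasgn} forces $\dist_w\in I$ for all such $I$. Two distinct pairs $(u,I),(u',I')\in\sigma(w)$ would therefore have $\dist_w\in I\cap I'$, so $I\cap I'\neq\emptyset$; if both lay on $P$, the disjointness of intervals along $P$ (transitively, since $\dist$ is strictly increasing along $P$) would be contradicted. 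Thus at most one pair of $\sigma(w)$ can appear on $P$.

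I expect the main obstacle to be the careful bookkeeping of the shortcutting step: one must argue that no pair $(u,I)$ with $u\in Z$ but $(u,I)\notin\sigma(w)$ survives — equivalently, that $\sigma(w)$ captures \emph{all} surviving pairs anchored in $Z$. This uses that $W$ contains exactly one witness per non-root component (step A2) and that $\sigma(w')$ is defined using only pairs $(u,I)$ with $u$ in $w'$'s own component. A minor subtlety is the node $r$, which is never in a component $Z$ with $r\notin Z$, so it contributes nothing to $\pi(P)\cap Z$. Once these points are pinned down, combining the bijection with the interval-disjointness argument yields both the equality and the inequality, completing the proof.
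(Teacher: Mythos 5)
Your proof is correct and follows essentially the same route as the paper's: you establish a bijection between nodes of $\pi(P)\cap Z$ and pairs in $P\cap\sg(w)$ by tracing which pairs survive the shortcutting (using that $\sg(w')$ is anchored in $w'$'s own component, so only pairs from $\sg(w)$ can survive with their $G$-node in $Z$), and you bound $|P\cap\sg(w)|\leq 1$ via \eqref{vasgn} (every $I$ with $(u,I)\in\sg(w)$ contains $\dist_w$) combined with the pairwise disjointness of distance intervals along $P$, correctly noting that disjointness must be propagated transitively using the strictly increasing distances. The only detail the paper spells out that you leave implicit is that each $u\in\pi(P)\cap Z$ corresponds to a \emph{unique} pair $(u,I)\in P$ (which also follows from strict monotonicity of distances along $P$), but this is minor and does not affect the correctness of your argument.
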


\begin{proofbox} 
Suppose $u$ be a node in $\pi(P)\cap Z$. Then, there is a unique $I\in\D_u$ such that
$(u,I)\in P$. It must be that $(u,I)\in\sg(w)$, as otherwise, since $(u,I)$ cannot be in
$\sg(w')$ for any other witness node $w'$, we would have shortcut $P$ past
$(u,I)$. Conversely, if $(u,I)\in P\cap\sg(w)$, then by construction, we have $u\in\pi(P)$.
Thus, the number of nodes in $|\pi(P)\cap Z|$ is equal to $|P\cap\sg(w)|$. 

Finally, we have $|P\cap\sg(w)|\leq 1$ for any flow path $P\in\Pc$ and any $w\in W$, 
since if $(u,I)\in P\cap\sg(w)$ ($u$ could be $w$) then
$\dist_w\in I$ and the distance intervals corresponding to nodes on $P$ are disjoint. 
\end{proofbox}

\begin{lemma} \label{acycflow}
$H$ is acyclic. 
\mbox{For any component $Z$ of $F$, $\sum_{Q\in\Qc: Z\in Q}y_Q\geq\tht$.}
\end{lemma}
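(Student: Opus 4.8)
The plan is to prove the two claims of Lemma~\ref{acycflow} separately, each by reducing to the distance-increasing structure of the flow paths $P\in\Pc$ and the key ``at most one node per component'' fact from Lemma~\ref{helper}.

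\textbf{Acyclicity of $H$.} First I would recall that each $P\in\Pc$ is a path in the auxiliary graph on which every arc $((u,I),(v,J))$ satisfies $\dist_u<\dist_v$; after dropping $t$ and shortcutting, $\pi(P)$ is therefore a rooted path in $G$ whose node-distances $\dist$ are strictly increasing along the path (leaving aside $r$, which has $\dist_r=0$). So along any $\pi(P)$, if $Z, Z'$ are two distinct components of $F$ that $\pi(P)$ meets, and $\pi(P)$ enters $Z$ before $Z'$, then (using Lemma~\ref{helper}, $\pi(P)$ meets each of $Z,Z'$ in exactly one node, say $u\in Z$ and $u'\in Z'$) we get $\dist_u<\dist_{u'}$. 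Define for each component $Z$ the value $m(Z)=\min_{v\in Z}\dist_v$ if $r\notin Z$, and handle the component containing $r$ (if any) as the unique source. I would argue that whenever $Q\in\Qc$ has an arc from $Z$ to $Z'$ (i.e. $\pi(P)$ for some $P$ mapping to $Q$ goes from a node of $Z$ directly—after contraction—to a node of $Z'$), we have $m(Z)<m(Z')$: indeed the node of $Z$ on $\pi(P)$ comes before the node of $Z'$ on $\pi(P)$, and distances strictly increase, and each is the unique $\pi(P)$-node in its component; combined with how $m$ is a min over the whole component this needs a small argument, but the cleanest route is to order the components along each $\pi(P)$ by the distance of their unique representative node and observe this is consistent with a global strict order because two components meeting a common path are forced into a fixed relative order, and Lemma~\ref{lem:cross}-type reasoning (the sentinels on a single \rvrp path have disjoint, increasing distance intervals) rules out two representatives with equal distance. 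Since every arc of $H$ strictly increases this potential, $H$ is acyclic.

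\textbf{Coverage of each component to extent $\ge\tht$.} For a component $Z$ with witness $w$ and $\sg(w)=\{(u,I): u\in Z,\ x^*_{u,I,w}>0\}$, step A2 guarantees $\sum_{(u,I)\in\sg(w)} x^*_{u,I,w}\ge\tht$. Now $x^*_{u,I,u}=x_{u,I,u}^*\ge x^*_{u,I,w}$ by \eqref{vasgn}, and the sentinel conservation constraint \eqref{usentin} gives $x^*_{u,I,u}=f^*_{r,u,I}+\sum_{v,J}f^*_{v,J,u,I}$, i.e. $x^*_{u,I,u}$ equals the amount of $f^*$-flow passing through the auxiliary node $(u,I)$. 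In path-decomposition terms, $x^*_{u,I,u}=\sum_{P\in\Pc:\,(u,I)\in P} f^*_P$. Summing over $(u,I)\in\sg(w)$ and using Lemma~\ref{helper} (each flow path $P$ contains \emph{at most one} element of $\sg(w)$, so these events are disjoint over $(u,I)\in\sg(w)$):
\[
\sum_{(u,I)\in\sg(w)} x^*_{u,I,u} \;=\; \sum_{P\in\Pc:\, P\cap\sg(w)\neq\es} f^*_P \;=\; \sum_{P\in\Pc:\, \pi(P)\cap Z\neq\es} f^*_P,
\]
where the last equality is exactly Lemma~\ref{helper} again ($\pi(P)$ meets $Z$ iff $P$ meets $\sg(w)$). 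Finally, $\pi(P)\cap Z\neq\es$ means that after contracting $F$ the path $Q$ that $\pi(P)$ maps to contains the contracted node $Z$; grouping flow paths by their image $Q$ and recalling $y_Q=\sum_{P:\,\pi(P)\text{ maps to }Q} f^*_P$ gives $\sum_{P:\,\pi(P)\cap Z\neq\es} f^*_P=\sum_{Q\in\Qc:\,Z\in Q} y_Q$. Chaining the inequalities: $\sum_{Q\in\Qc: Z\in Q} y_Q \ge \sum_{(u,I)\in\sg(w)} x^*_{u,I,w} \ge \tht$, as desired.

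\textbf{Main obstacle.} I expect the acyclicity argument to be the delicate part: one must be careful that ``$Z$ precedes $Z'$ on $\pi(P)$'' yields a \emph{globally} consistent strict partial order on components, not merely a per-path order. The coverage bound is essentially bookkeeping once Lemma~\ref{helper} and constraint \eqref{usentin} are in hand. The resolution for acyclicity is to pin down a single real-valued potential on components—the distance of the (necessarily unique) representative node on any path through it, which is well-defined because Lemma~\ref{lem:cross}(ii) / the disjointness of sentinel distance intervals forces two components sharing a path to have representatives at strictly different distances—and check every arc of $H$ increases it.
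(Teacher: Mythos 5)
Your coverage argument (second claim) matches the paper's proof essentially line-for-line and is correct: using Lemma~\ref{helper} twice, constraint \eqref{usentin} plus the path decomposition to get $\sum_{(u,I)\in\sg(w)} x^*_{u,I,u}=\sum_{P:P\cap\sg(w)\neq\es}f^*_P$, and constraint \eqref{vasgn} to get $x^*_{u,I,u}\geq x^*_{u,I,w}$. No issues there.

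The acyclicity argument, however, has a genuine gap. You oscillate between two candidate potentials, and neither one, as you state it, actually works. The first, $m(Z)=\min_{v\in Z}\dist_v$, fails because an arc $(Z_1,Z_2)$ only gives you one node $u_1\in Z_1$ preceding one node $u_2\in Z_2$ with $\dist_{u_1}<\dist_{u_2}$; this does not imply $\dist_{u_1}$ is below the minimum over all of $Z_2$, nor that the minimum over $Z_1$ is below $\dist_{u_2}$. The second, ``the distance of the unique representative node on any path through $Z$,'' is not well-defined as a function of $Z$: a given component $Z$ can be hit by two flow paths $P,P'$ through two different pairs $(u,I),(u',I')\in\sg(w)$ with $u\neq u'$ and $\dist_u\neq\dist_{u'}$ (all intervals in $\sg(w)$ contain $\dist_w$, so they need not be disjoint from one another, and there is no reason the representatives coincide). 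Your appeal to Lemma~\ref{lem:cross}(ii) only shows that two \emph{different} components on a common path have representatives at different distances, which does nothing to make a per-component value well-defined. The paper's fix, which you should use, is to take the potential to be $\dist_w$ where $w$ is the \emph{witness node} of $Z$ --- this is intrinsic to $Z$. To see it strictly increases across an arc $(Z_1,Z_2)$: the arc comes from some flow path $P$ with $(u_1,I_1)\in P\cap\sg(w_1)$ before $(u_2,I_2)\in P\cap\sg(w_2)$. Constraint \eqref{vasgn} forces $\dist_{w_1}\in I_1$ and $\dist_{w_2}\in I_2$ (and $\dist_{u_i}\in I_i$ since $I_i\in\D_{u_i}$). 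The auxiliary-graph edges (equivalently constraint \eqref{dinc}) make the distance intervals along $P$ pairwise disjoint and ordered by transitivity, so $I_1$ lies entirely to the left of $I_2$, giving $\dist_{w_1}<\dist_{w_2}$. Note the relevant structural fact is this ordered disjointness built into the auxiliary graph, not Lemma~\ref{lem:cross}, which concerns red intervals of an \rvrp path and is not directly applicable here.
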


\begin{proof} 
Let $Z'$ be a node of $H$, and $w'$ be the witness node of $Z'$. Give $Z'$ the label
$\dist_{w'}$. We claim that sorting the nodes of $H$ in increasing order of their labels
yields a topological ordering of $H$, showing that $H$ is acyclic. 
Let $(Z_1,Z_2)$ be an arc of $H$. 
Let $w_1$, $w_2$ be the witness nodes corresponding to $Z_1$ and $Z_2$ respectively. Then
there is a flow path $P\in\Pc$ and some edge $(u_1,u_2)$ of $\sg(P)$ where $u_1\in Z_1$, 
$u_2\in Z_2$. So there exist $(u_1,I_1)\in P\cap\sg(w_1)$ and $(u_2,I_2)\in P\cap\sg(w_2)$
such that $(u_2,I_2)$ appears after $(u_1,I_1)$ on $P$. Then, $\dist_{u_1}<\dist_{u_2}$, 
$\dist_{u_1},\dist_{w_1}\in I_1$, $\dist_{u_2},\dist_{w_2}\in I_2$, and $I_1\cap I_2=\es$. 
This implies that $\dist_{w_1}<\dist_{w_2}$.

Let $w$ be the witness node of $Z$. We have 
$\sum_{Q\in\Qc: Z\in Q}y_Q=\sum_{P\in\Pc:\pi(P)\cap Z\neq\es}f^*_P=\sum_{P\in\Pc:P\cap\sg(w)\neq\es}f^*_P$
where the last equality follows from Lemma~\ref{helper}. 
Since $|P\cap\sg(w)|\leq 1$ for all $P\in\Pc$, we have
$$
\sum_{P\in\Pc:P\cap\sg(w)\neq\es}f^*_P=\sum_{(u,I)\in\sg(w)}\sum_{P\in\Pc:(u,I)\in P}f^*_P
=\sum_{(u,I)\in\sg(w)}x^*_{u,I,u}\geq\tht.
$$
The second equality above follows from \eqref{usentin} and since $\{f^*_P\}_{P\in\Pc}$ is a
path decomposition of $f^*$. 
\end{proof}

\begin{theorem} \label{rlp2thm}
The above algorithm returns an \rvrp-solution with at most 
$\bigl(\frac{6}{1-\tht}+\frac{1}{\tht}\bigr)\OPT+\ceil{\frac{\OPT}{\tht}}$ paths. 
Thus, taking $\tht=\frac{1}{3}$, we obtain at most $15\cdot\iopt$ paths.
\end{theorem}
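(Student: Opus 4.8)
The plan is to account for the three sources of paths/cost in the algorithm and then invoke Lemma~\ref{avg2max}. First I would bound the number of rooted paths $\hP_1,\dots,\hP_k$ and their total regret coming from step A5. By Lemma~\ref{acycflow}, $\{y_Q/\tht\}_{Q\in\Qc}$ is a feasible (acyclic) flow covering each contracted component $Z$ to extent at least $1$, with total flow value $\bigl(\sum_{u,I}f^*_{r,u,I}\bigr)/\tht=\OPT/\tht$ and regret-cost at most $\bigl(\sum_{u,v,I,J}c^\reg_{uv}f^*_{u,I,v,J}\bigr)/\tht\le R\cdot\OPT/\tht$ by \eqref{costf} (using $c^\reg_{ru}=0$). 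Since the underlying graph $H$ is acyclic and all the relevant data (coverage requirements, costs) are integral, the integrality property of flows (equivalently, rounding the flow polytope on a DAG) yields an integral flow of value $k\le\ceil{\OPT/\tht}$ and regret-cost at most $\frac{R}{\tht}\cdot\OPT$; this integral flow decomposes into $k$ rooted paths $\hP_1,\dots,\hP_k$, each component $Z$ of $F$ lying on exactly one of them, with $\sum_i c^\reg(\hP_i)\le\frac{R}{\tht}\cdot\OPT$.

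Next I would bound the additional cost of converting the $\hP_i$ to the $\tP_i$ in step A6. For each component $Z$ on $\hP_i$, we splice in a $u$-$v$ walk covering $Z$ obtained by doubling all edges of $Z$ except those on the $u$-$v$ tree-path and shortcutting; the extra $c$-cost charged to $Z$ is at most $2c(Z)$ (in fact at most $2c(F\cap Z)$). Summing over all components, the total extra cost is at most $2c(F)\le\frac{6R}{1-\tht}\cdot\OPT$ by Lemma~\ref{forest}. Moreover, since each $\tP_i$ is still a rooted path and regret of a rooted path is at most its $c$-cost minus $\dist$ of its endpoint, one checks that $\sum_i c^\reg(\tP_i)\le\sum_i c^\reg(\hP_i)+2c(F)$: splicing a doubled-and-shortcut detour of a connected subgraph $Z$ into a rooted path increases the regret by at most twice the cost of the spanning structure used, because the detour is a closed walk from $u$ back to the point where the path continues, and for a cycle/closed walk regret equals $c$-cost (the remark $c^\reg(Z)=c(Z)$ for cycles). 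Hence $\sum_{i=1}^k c^\reg(\tP_i)\le\frac{R}{\tht}\cdot\OPT+\frac{6R}{1-\tht}\cdot\OPT=\bigl(\frac{1}{\tht}+\frac{6}{1-\tht}\bigr)R\cdot\OPT$, and the $\tP_i$ cover all of $V$ (every node is either a sentinel lying on some $\hP_i$, hence on some $\tP_i$, or a non-sentinel in some component $Z$, hence covered when $Z$'s detour is spliced in; coverage of non-sentinels uses \eqref{vcov} and \eqref{vcon} via the forest $F$, which connects every such node into a component that carries a witnessed sentinel).

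Finally, apply Lemma~\ref{avg2max} to $\tP_1,\dots,\tP_k$: with total regret $\al R$ where $\al=\bigl(\frac{1}{\tht}+\frac{6}{1-\tht}\bigr)\OPT$ and $k\le\ceil{\OPT/\tht}$ paths, we obtain at most $k+\al\le\ceil{\frac{\OPT}{\tht}}+\bigl(\frac{6}{1-\tht}+\frac{1}{\tht}\bigr)\OPT$ rooted paths, each of regret at most $R$, covering $V$ — an \rvrp-solution of the claimed size. Plugging $\tht=\frac13$ gives $\frac{1}{\tht}=3$ and $\frac{6}{1-\tht}=9$, so the count is at most $12\cdot\OPT+\ceil{3\OPT}\le 12\cdot\OPT+3\ceil{\OPT}\le 15\ceil{\OPT}\le 15\cdot\iopt$ using Lemma~\ref{rlp2relax}. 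I expect the main obstacle to be the regret-accounting in step A6 — precisely controlling how much the spliced-in component detours inflate $c^\reg(\hP_i)$, for which the key fact is that a doubled-and-shortcut detour around a connected subgraph behaves like a closed walk whose regret equals its $c$-cost, so it contributes at most $2c(F)$ in total rather than something that scales with path lengths or $\dist$-values; the coverage bookkeeping (that the witnessed components together with the sentinel flow-paths really do hit every node of $V$) is the other point needing care.
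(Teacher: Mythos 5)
Your proposal is correct and follows essentially the same approach as the paper's proof: bound the total regret of the $\hP_i$s by $\frac{R}{\tht}\cdot\OPT$ via \eqref{costf}, bound the extra regret from each spliced-in component $Z$ by $2c(Z)$ and sum with Lemma~\ref{forest} to get $\frac{6R}{1-\tht}\cdot\OPT$, then apply Lemma~\ref{avg2max} and plug in $\tht=\frac13$. The one minor imprecision is describing the spliced-in $u$-$v$ detour as a ``closed walk''; the paper instead dominates its $c^{\reg}$-cost by that of the $u$-$u$ Euler tour $Y$ of the doubled component, for which $c^{\reg}(Y)=c(Y)=2c(Z)$, and shortcutting (which never increases $c^\reg$-cost in the regret metric) then gives the same $2c(Z)$ bound you use.
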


\begin{proofbox}
The total regret of the paths $\tP_1,\ldots,\tP_k$ is at most
$\bigl(\frac{6}{1-\tht}+\frac{1}{\tht}\bigr)R\cdot\OPT$. This follows because the total
regret of $\hP_1,\ldots,\hP_k$ is at most $\frac{R}{\tht}\cdot\OPT$, and the regret-cost of
the path added for each component $Z$ is at most the regret-cost of the tour $Y$ obtained
by doubling all edges of $Z$, and $c^\reg(Y)=c(Y)=2c(Z)$. Combining this with
Lemma~\ref{forest} proves the claim.
So applying Lemma~\ref{avg2max} to $\tP_1,\ldots,\tP_k$ yields the stated bound, and for
$\tht=\frac{1}{3}$, this bound translates to 
$12\cdot\OPT+\ceil{3\cdot\OPT}\leq 15\cdot\iopt$.
\end{proofbox}

\section{Minimum-regret TSP-path} \label{minregtsp}
We now consider the {\em minimum-regret TSP-path} problem, wherein we have (as
before), a complete graph $G=(V',E)$, $r,t\in V'$, metric edge costs $\{c_{uv}\}$, and we seek a
minimum-regret $r$-$t$ path that visits all nodes.
Observe that this is precisely the \atsp{\em -path} problem
under the asymmetric regret metric $c^\reg$. We establish a tight bound of 2 on the
integrality gap of the standard \atsp-path LP for the class of regret-metrics (induced by
a symmetric metric). 
We consider the following LP for min-regret TSP path.
Let $D=(V',A)$ be the bidirected version of $G$. Let $b_t=1=-b_r$ and $b_v=0$ for all
$v\in V'\sm\{r,t\}$.
\begin{alignat}{1}
\min \ \ \sum_{a\in A} c^{\reg}_ax_a \quad \text{s.t.} \quad 
x\bigl(\delta^{\into}(v)\bigr)&-x\bigl(\delta^{\out}(v)\bigr)=b_v\ \ \forall v\in V',
\quad x\geq 0 \tag{{R-TSP}} \label{lp:atsp} \\
x(\delta^{\into}(S)) & \geq 1 \quad \forall \es\neq S\subseteq V\sm\{r\}. \label{xcon}
\end{alignat}
Clearly, \eqref{lp:atsp} is no stronger than the LP where we impose indegree and outdegree
constraints on the nodes. 
In contrast to Theorem~\ref{thm:atsp_gap}, for general asymmetric metrics, even this
stronger LP is only known to have 
integrality gap $O\left(\frac{\log n}{\log\log n}\right)$~\cite{AsadpourGMOS10,fgs:atspp}.
(The corresponding LP for \atsp has $\poly(\log\log n)$ integrality
gap~\cite{aog}.) 

\begin{theorem}\label{thm:atsp_gap}
The integrality gap of \eqref{lp:atsp} is 2 for regret metrics, 
and we can obtain an \atsp-path solution with $c^\reg$-cost at most
$2\cdot\OPT_{\text{\ref{lp:atsp}}}$ in polynomial time. 
\end{theorem}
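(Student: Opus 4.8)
The plan is to exploit the same arborescence-packing machinery used throughout the paper, applied to the regret metric. First I would argue the lower bound: a simple instance (e.g.\ $r$, $t$ at distance $1$ and a node $v$ with $c_{rv}=c_{vt}=1$, or a cycle-type example) shows that the optimal $r$-$t$ Hamiltonian path has regret roughly twice the LP value, since $c^\reg(Z)=c(Z)$ for cycles means the LP can ``split the cost'' between the $r$-$t$ flow and a cheap fractional cycle cover much as in the orienteering integrality-gap examples of Section~\ref{sec:rt_orient}. I would keep this example small and just verify feasibility of a fractional solution of value about half the integral optimum.

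For the upper bound, given an optimal solution $x$ to \eqref{lp:atsp}, note that $x$ restricted to $V'\setminus\{r\}$ is an $r$-preflow in $D$ (the flow-conservation constraints give exactly $x(\delta^\into(v))\ge x(\delta^\out(v))$ for $v\ne r$, with equality for $v\ne t$), and the cut constraints \eqref{xcon} ensure the $r\leadsto v$ connectivity $\lambda_v$ under $x$ is at least $1$ for every $v\in V'\setminus\{r\}$. Apply Theorem~\ref{arbpack} with $K=1$: this produces out-arborescences $B_1,\dots,B_q$ rooted at $r$ with weights $\gamma_i\ge 0$, $\sum_i\gamma_i=1$, $\sum_{i:a\in B_i}\gamma_i\le x_a$, and $\sum_{i:v\in B_i}\gamma_i=\min\{1,\lambda_v\}=1$ for all $v$ — so every $B_i$ is in fact $r$-spanning. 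Hence $\sum_i\gamma_i\,c^\reg(B_i)\le\sum_a c^\reg_a x_a=\OPT_{\text{\ref{lp:atsp}}}$ (here I use $c^\reg\ge 0$ and that the weighted arborescences are dominated by $x$), so some spanning out-arborescence $B=B_i$ has $c^\reg(B)\le\OPT_{\text{\ref{lp:atsp}}}$.

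Now convert $B$ into an $r$-$t$ Hamiltonian path. The key point is that $B$ is an $r$-rooted spanning tree in the underlying $G$, so I can double all edges of $B$ except those on the (unique) $r$-$t$ path of $B$ and shortcut to obtain a simple $r$-$t$ path $P$ visiting all nodes; this is exactly the doubling-and-shortcutting trick used in the proof of Theorem~\ref{thm:rt_gap}. Since regret is super-additive under this operation exactly as in inequality~\eqref{orineq1} (the regret of $P$ is at most $2\bigl(c(B)-\dist_t\bigr)=2\,c^\reg(B)$ when $B$ is viewed rooted at $r$ with endpoint $t$), we get $c^\reg(P)\le 2\,c^\reg(B)\le 2\cdot\OPT_{\text{\ref{lp:atsp}}}$. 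Everything is polynomial-time by the ``moreover'' clause of Theorem~\ref{arbpack} and because solving \eqref{lp:atsp} and doing the tree-to-path conversion are polynomial.

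The main obstacle I anticipate is the regret bookkeeping in the doubling step: unlike the $c$-metric, $c^\reg$ is asymmetric, so I must be careful that shortcutting a doubled spanning tree really does bound the regret of the resulting $r$-$t$ path by $2\,c^\reg(B)$ — the clean identity to lean on is that for the doubled tree the Euler-tour cost is $2c(B)$, a closed walk from $r$ has $c^\reg=c$, and shortcutting the walk to an $r$-$t$ path only decreases regret; subtracting the $\dist_t$ term that the path ``saves'' relative to the tour then gives the factor-$2$ bound. The other minor care point is confirming that the arborescences from Theorem~\ref{arbpack} really are spanning (so that $P$ is Hamiltonian), which follows because $\lambda_v\ge 1=K$ for all $v$; I would state this explicitly.
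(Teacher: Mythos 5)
Your upper-bound approach is the same as the paper's (apply Theorem~\ref{arbpack} with $K=1$, observe the arborescences are spanning, convert to $r$-$t$ paths by doubling the off-$r$-$t$-path edges), but your accounting contains a genuine error. You pick a spanning arborescence $B$ with $\sum_{a\in B}c^\reg_a\le\OPT_{\text{\ref{lp:atsp}}}$ and then assert $c^\reg(P)\le 2(c(B)-\dist_t)=2\sum_{a\in B}c^\reg_a$. The equality $\sum_{a\in B}c^\reg_a=c(B)-\dist_t$ is false for a general arborescence: summing $c^\reg_{uv}=\dist_u+c_{uv}-\dist_v$ over the arcs of $B$ (directed away from $r$) gives $c(B)+\sum_{u\ne r}\dist_u\bigl(\text{outdeg}_B(u)-1\bigr)$, which equals $c(B)-\dist_t$ only when $B$ is an $r$-$t$ path. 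When $B$ has other leaves (e.g.\ a star rooted at $r$), $\sum_a c^\reg_a$ can be far smaller than $c(B)-\dist_t$, so the chain $c^\reg(P)\le 2(c(B)-\dist_t)\le 2\sum_{a\in B}c^\reg_a$ breaks. The correct bookkeeping, and what the paper actually does, is to track $c$-cost, not $c^\reg$-cost, through the arborescence decomposition: Theorem~\ref{arbpack} gives $\sum_T\gm_T c(T)\le\sum_a c_a x_a$, each tree yields a path with $c^\reg(P_T)\le 2(c(T)-\dist_t)$, and the algebraic identity $\sum_a c^\reg_a x_a=\sum_a c_a x_a-\dist_t$ (which the paper proves from flow conservation for any feasible solution to \eqref{lp:atsp}, by passing through an auxiliary LP~\eqref{watsp}) ties the average back to $2\cdot\OPT_{\text{\ref{lp:atsp}}}$. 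So you should not pick $B$ minimizing $\sum_{a\in B}c^\reg_a$; you should average $c^\reg(P_T)$ over the decomposition and use the identity above.

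Your lower-bound sketch is too thin. The three-node example you suggest ($r$, $t$, $v$ with all pairwise distances equal to $1$) has integrality gap $1$: the integral optimum $r\to v\to t$ has regret $1$, and the LP cannot do better because the cut constraint on $\{v\}$ forces a full unit of flow through $v$, which then must exit along arcs of total $c^\reg$-cost at least $1$. The intuition about ``splitting the cost'' between a path and a cheap cycle cover is in the right direction, but a working instance needs more structure. The paper uses the standard Held--Karp gap family $G_k$: two length-$(k{+}1)$ $r$-$t$ paths joined by crossing edges at the ends, with all edges of unit cost. There the minimum $r$-$t$ Hamiltonian walk has $c$-cost at least $3k$ (hence $c^\reg$-cost at least $2k-1$), while an explicit fractional solution sending $\tfrac{3}{4}$ and $\tfrac{1}{4}$ flow along the forward and backward arcs of each path segment, plus small flows on the crossing edges, has $c^\reg$-cost exactly $k$, giving a gap approaching $2$ as $k\to\infty$. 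Without such a construction your lower bound is missing.
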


\begin{proof} 
We first describe the rounding algorithm showing an integrality-gap upper bound of 2. 
It is convenient to consider the following weaker LP. 
\begin{equation}
\min \quad \sum_a c_ax_a-\dist_t \qquad \text{s.t.} \qquad
\text{$x$ is an $r$-preflow}, \quad \eqref{xcon}, \quad x\geq 0.
\tag{P} \label{watsp}
\end{equation}
LP \eqref{watsp} is weaker than \eqref{lp:atsp} because if $x$ is a feasible solution to
\eqref{lp:atsp} then it is clearly feasible to \eqref{watsp}, and
$$
\sum_{(u,v)\in A}c^\reg_{u,v}x_{u,v}=\sum_{a\in A}c_ax_a
+\sum_{u\in V'}\dist_u\Bigl(x\bigl(\dt^\out(u)\bigr)-x\bigl(\dt^\into(u)\bigr)\Bigr)
=\sum_{a\in A}c_ax_a-\dist_t.
$$ 
We show how to obtain an \atsp-path solution of $c^{\reg}$-cost at most
$2\OPT_{\text{\ref{watsp}}}$,
thereby showing that \eqref{watsp}, and hence \eqref{lp:atsp} has integrality gap at most
2. 

Let $x^*$ be an optimum solution to \eqref{watsp}. 
Since the $r\leadsto v$ connectivity is 1 under $x^*$, applying Theorem~\ref{arbpack} to
$x^*$ with $K=1$, yields a collection of $r$-rooted out-arborescences, all of which span
$V'$.  
As should be routine by now, we view these arborescences as spanning trees in $G$, convert
each spanning tree to an $r$-$t$ path via doubling and shortcutting, and return the path
with the smallest $c^\reg$-cost. The $c^\reg$-cost of the path obtained from tree $T$ is
at most $2\bigl(c(T)-\dist_T\bigr)$. The bound now follows because if $\{\gm_T\}$ are the
nonnegative weights obtained from Theorem~\ref{arbpack} (which sum up to 1), the
$c^\reg$-cost we obtain is at most
$\sum_T\gm_T\cdot 2\bigl(c(T)-\dist_t\bigr)=2\bigl(\sum_a c_ax^*_a-\dist_t\bigr)$.

As noted earlier, this integrality gap upper bound of $2$ can also be inferred from the
result of~\cite{ChaudhuriGRT03}, which (in particular) shows that one can obtain a
spanning tree of cost at most the min-cost Hamiltonian $r$-$t$ path.

\paragraph{Lower bound of 2 on the integrality gap.}
We show a lower bound of 2 on the integrality gap even for the stronger LP, where we
we additionally impose indegree and outdegree constraints on the nodes: i.e., we impose 
$x\bigl(\dt^\out(r)\bigr)=1=x\bigl(\dt^\into(t)\bigr)=x\bigl(\dt^\into(v)\bigr)=x\bigl(\dt^\out(v)\bigr)$
for all $v\neq r,t$, and $x\bigl(\dt^\into(r)\bigr)=0=x\bigl(\dt^\out(t)\bigr)$.

Consider the graph $G_k = (V_k,E_k)$ for a given value $k \geq 2$ shown in
Figure~\ref{fig:gap_2}, which is the standard example showing integrality gap of 
$\frac{3}{2}$ for the Held-Karp relaxation for symmetric \tsp. 
Here, $V_k=\{r, t, u_1, \ldots, u_{k}, v_1, \ldots, v_{k}\}$ 
and $E_k = \{ru_1, rv_1, u_1v_1\}\cup\{u_{k}t, v_{k}t, u_kv_k\}\cup
\bigcup_{i=1}^{k} \{u_iu_{i+1}, v_iv_{i+1}\}$.
All edges of $E_k$ have cost 1 and $c$ is the induced shortest-path metric.

\begin{figure}
\begin{center}
\includegraphics[width=0.4\textwidth]{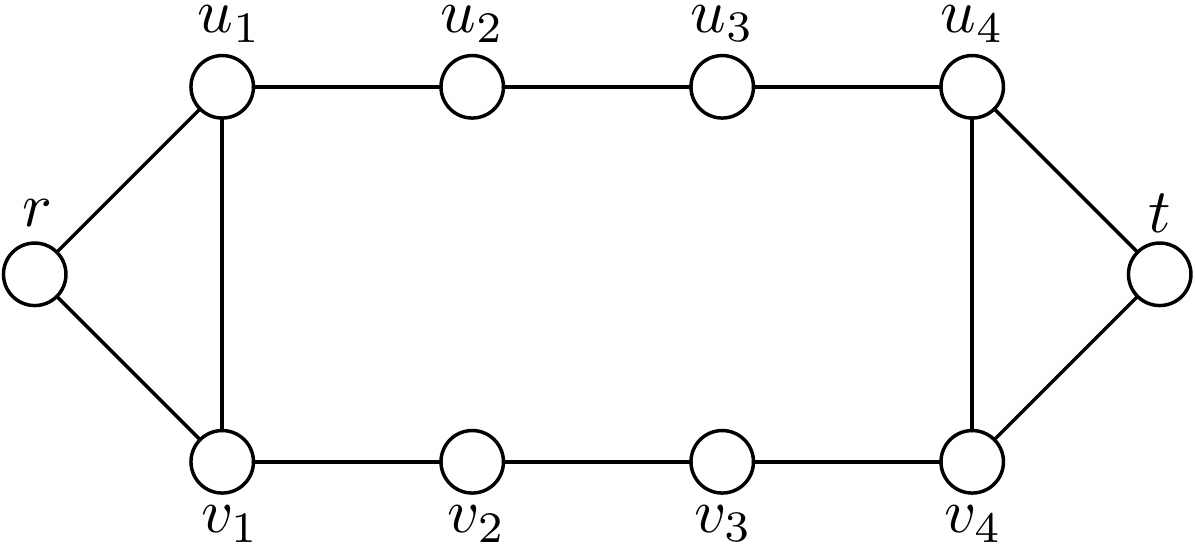}
\caption{The graph $G_4$.}\label{fig:gap_2}
\end{center}
\end{figure}

It is well known that any $r$-$t$ walk that visits all nodes has $c$-cost at least $3k$,
so since $\dist_t=k+1$, the optimal integer solution has $c^\reg$-cost at least $2k-1$. We
exhibit a fractional solution of $c^\reg$-cost $k$. Consider the following fractional solution.

\begin{align*}
& x_{r,u_1}=x_{r,v_1}=x_{u_k,t}=x_{v_k,t}=\tfrac{1}{2}, 
\qquad x_{u_1,v_1}=x_{v_1,u_1}=x_{u_k,v_k}=x_{v_k,u_k}=\tfrac{1}{4} \\
& \left.\begin{gathered}
x_{u_i,u_{i+1}}=x_{v_i,v_{i+1}}=\tfrac{3}{4} \\ 
x_{u_{i+1},u_i}=x_{v_{i+1},v_i}=\tfrac{1}{4} 
\end{gathered} \quad \right\} \quad \forall i=1,\ldots,k-1.
\end{align*}
All other $x_a$ are set to $0$.
It is easy to verify that this is a feasible fractional solution (satisfying the indegree
and outdegree constraints as well). Among the arcs $a$ with $x_a>0$, arcs $(u_1,v_1)$,
$(v_1,u_1)$, $(u_k,v_k)$, and $(v_k,u_k)$ have $c^\reg$-cost 1, and arcs of the form
$(u_{i+1},u_i)$ and $(v_{i+1},v_i)$ have $c^\reg$-cost 2. So we have 
$\sum_{a\in A}c^\reg_ax_a=k$. 
\end{proof}


\begin{thebibliography}{10}

\bibitem{aog}
N.~Anari and S.~Oveis~Gharan.
\newblock Effective-resistance-reducing flows, spectrally thin trees, and asymmetric TSP.
\newblock In {\em Proceedings of FOCS}, 2015.

\bibitem{AsadpourGMOS10} 
A.~Asadpour, M.~X.~Goemans, A.~Madry, S.~Oveis~Gharan, and A.~Saberi.
\newblock An $O(\log n / \log\log n)$-approximation algorithm for the asymmetric traveling
salesman problem.  
\newblock In {\em Proceedings of SODA}, 2010. 

\bibitem{BangjensenFJ95}
J.~Bang-Jensen, A.~Frank, and B.~Jackson.
\newblock Preserving and increasing local edge-connectivity in mixed graphs.
\newblock {\em SIAM J. Discrete Math.}, 8(2):155--178, 1995.

\bibitem{BansalBCM04}
N.~Bansal, A.~Blum, S.~Chawla, and A.~Meyerson.
\newblock Approximation algorithms for deadline-TSP and vehicle routing with time windows.
\newblock In {\em 36th STOC}, 2004.

\bibitem{BC+94}
A. Blum, P. Chalasani, D. Coppersmith, B. Pulleyblank, P. Raghavan, and M. Sudan.
\newblock The Minimum Latency Problem.
\newblock In {\em 26th STOC}, pages 163--171, 1994.

\bibitem{BlumCKLMM07} 
A.~Blum, S.~Chawla, D.~R.~Karger, T.~Lane, and A.~Meyerson.  
\newblock Approximation algorithms for orienteering and discount-reward TSP.  
\newblock {\em SICOMP}, 37:653--670, 2007.

\bibitem{BockGKS11} 
A.~Bock, E.~Grant, J.~K\"{o}nemann, and L.~Sanita.  
\newblock The school bus problem in trees.  
\newblock In {\em Proceedings of ISAAC}, 2011.

\bibitem{ChakrabartyS16}
D.~Chakrabarty and C.~Swamy.
\newblock Facility location with client latencies: linear-programming based techniques for
minimum-latency problems. 
\newblock In {\em Mathematics of Operations Research}, 41(3):865--883, 2016.

\bibitem{CharikarR98}
M.~Charikar and B.~Raghavachari. 
\newblock The finite capacity dial-a-ride problem. 
\newblock In {\em Proceedings of 39th FOCS}, pages 458--467, 1998.

\bibitem{ChaudhuriGRT03}
K. Chaudhuri, P. B. Godfrey, S. Rao, and K. Talwar.
\newblock Paths, Trees and Minimum Latency Tours.
\newblock In {\em Proceedings of 44th FOCS}, pages 36--45, 2003.

\bibitem{ChekuriKP12} 
C.~Chekuri, N.~Korula, and M.~P\'al.  
\newblock Improved algorithms for orienteering and related problems.  
\newblock {\em ACM Transactions on Algorithms}, 8(3), 2012.

\bibitem{FakcharoenpholHR07}
J.~Fakcharoenphol, C. Harrelson, and S.~Rao
\newblock The $k$-traveling repairman problem.
\newblock {\em ACM Trans. on Alg.}, Vol 3, Issue 4, Article 40, 2007.

\bibitem{fgs:atspp}
Z.~Friggstad, A.~Gupta, and M.~Singh.
\newblock An improved integrality gap for asymmetric TSP paths.
\newblock{\em Mathematics of Operations Research}, 41(3): 745--757, 2016.

\bibitem{FriggstadS14}
Z.~Friggstad and C.~Swamy.
\newblock Approximation algorithms for regret-bounded vehicle routing and applications to
distance-constrained vehicle routing.
\newblock In {\em Proceedings of STOC}, pages 744--753, 2014.
\newblock Detailed version posted on {\em CS arXiv}, Nov 2013.

\bibitem{FriggstadS17}
Z.~Friggstad and C.~Swamy.
\newblock Compact, provably-good LPs for orienteering and regret-bounded vehicle routing.
\newblock In {\em Proceedings of IPCO}, pages 199-211, 2017.

\bibitem{Garg05}
N. Garg. 
\newblock Saving an epsilon: a 2-approximation for the $k$-MST problem in graphs. 
\newblock In {\em Proceedings of the 37th STOC}, pages 396--402, 2005.

\bibitem{GoemansW94} 
M.~X.~Goemans and D.~P.~Williamson 
\newblock Approximating minimum-cost graph problems with spanning tree edges.  
\newblock {\em Operations Research Letters} 16:183--189, 1994.

\bibitem{GoldenLV87}
B.~L.~Golden, L.~Levy, and R.~Vohra.
\newblock The orienteering problem.
\newblock {\em Naval Research Logistics}, 34:307--318, 1987.

\bibitem{HaimovichK85}
M.~Haimovich and A.~Kan. 
\newblock Bounds and heuristics for capacitated routing problems. 
\newblock {\em Mathematics of Operations Research}, 10:527--542, 1985.

\bibitem{ParkK10}
J.~Park and B.~Kim.
\newblock The school bus routing problem: A review. 
\newblock {\em European Journal of Operational Research}, 202(2):311--319, 2010.

\bibitem{PostS15}
I.~Post and C.~Swamy.
\newblock Linear-programming based techniques for multi-vehicle minimum latency problems.
\newblock In {\em Proceedings of 26th SODA}, pages 512--531, 2015.

\bibitem{SpadaBL05}
M.~Spada, M.~Bierlaire, and T.~Liebling.
\newblock Decision-aiding methodology for the school bus routing and scheduling problem.
\newblock {\em Transportation Sc.}, 39:477--490, 2005.

\bibitem{TothV02}
P.~Toth and D.~Vigo, eds. 
\newblock {\em The Vehicle Routing Problem}. 
\newblock SIAM Monographs on Discrete Mathematics and Applications, Philadelphia, 2002.

\end{thebibliography}
\end{document}